\begin{document}
\newcommand{\ket}[1]{\ensuremath{\left|#1\right\rangle}}
\newcommand{\bra}[1]{\ensuremath{\left\langle#1\right|}}
\newtheorem{definition}{Definition} 
\newtheorem{theorem}{Theorem} 
\newtheorem{claim}{Claim}
\title{Secure two-party quantum computation for non-rational and rational settings}
\author{Arpita Maitra$^1$, Goutam Paul$^2$ and Asim K. Pal$^1$}
\affiliation{$^1$Management Information Systems Group,\\
Indian Institute of Management Calcutta, India.\\
Email: \{arpitam, asim\}@iimcal.ac.in\\
$^2$ Cryptology \& Security Research Unit,\\
R. C. Bose Centre for Cryptology \& Security,\\
Indian Statistical Institute, Kolkata,\\
Email: goutam.paul@isical.ac.in
}

\begin{abstract}
Since the negative result of Lo (Physical Review A, 1997), it has been left open whether there exist some functions that can be securely computed in two-party setting in quantum domain when one of the parties is malicious. In this paper, we for the first time, show that there are some functions for which secure
two-party quantum computation is indeed possible for non-simultaneous channel model. This is in sharp contrast with the impossibility result of Ben -Or et al. (FOCS, 2006) in broadcast channel model. The functions we study are of two types - one is any function without an embedded XOR, and the other one is a particular function containing an embedded XOR. Contrary to classical solutions, security against adversaries with unbounded power of computation is achieved by the quantum protocols due to entanglement. Further, in the context of secure multi-party quantum computation, for the first time we introduce rational parties, each of whom tries to maximize its utility by obtaining the function output alone. We adapt our quantum protocols for both the above types of functions in rational setting to achieve fairness and strict Nash equilibrium.
\end{abstract}
\maketitle

\section{Introduction}
In a secure two-party computation, two parties or players want to compute 
a particular function of their inputs keeping the inputs secret from each other. They are only allowed to obtain the output of the function preserving some security notions under certain adversarial model.

The secure two-party computation is a special case of `Secure Multi-party Computation' (SMC). In classical domain, the SMC problem has been 
studied extensively. The security of classical SMC comes from some
computational hardness assumptions and thus is conditional.
On the other hand, in quantum domain the adversary is always assumed to have unbounded power of computation and the security of a protocol comes from the laws of physics. 
This is why many researchers have
tried to exploit the quantum mechanical effect~\cite{S12} to solve the problems of SMC~\cite{CXNWY,He,HMG,JWSG,LWJ,Lo,TLH,YXJZ}.

 In~\cite{Lo}, it is pointed out that there are some functions which can not be securely evaluated in quantum domain for two-sided~\cite{footnote} two party setting. Later, Ben -Or et al.~\cite{Ben} generalized it by showing an impossibility result for $n$ players, when there are $\frac{n}{2}$ or more faulty players.
Since the work of~\cite{Lo} in 1997, in case of two-party quantum computation, some additional assumptions, such as the semi-honest third party etc., have been introduced to obtain the secure private comparison~\cite{CXNWY,TLH,YW}.

Yao's millionaires' problem~\cite{Yao} is one of the examples of the secure 
two-party computation. Yao's millionaires' problem~\cite{Yao}, or more precisely, the `greater than' function deals with two millionaires, Alice and Bob, who are interested in finding who amongst them is richer, without revealing their actual wealth to each other. Much effort has been given to solve this problem in quantum domain~\cite{HMG,JWSG,He,YXJZ,TLH}, all of which analyzed the security issues against several eavesdropping strategies. 
Jia et al.~\cite{JWSG} dealt the problem with semi-honest party. 
In~\cite{HMG}, the millionaires' problem is studied considering continuous variable. 
He~\cite{He} exploited the idea of quantum key distribution to solve the problem. 
Tseng et al.~\cite{TLH} proposed the use of Bell state to solve this problem. Their protocol also exploits a third party to assist the players. Yang et al.~\cite{YXJZ} showed the vulnerability of their protocol if the third party is disloyal. However, none of these works~\cite{CXNWY,HMG,JWSG,He,LWJ,YXJZ,TLH,YW} analyze the security issues considering malicious players.

In classical domain the subsequent work by Gordon et al.~\cite{Gordon,GK11}
showed that any function over polynomial-size domains which does not contain an ``embedded XOR" can be converted into the greater than function or more specifically into  the millionaires' problem. Hence, millionaires' problem covers all functions without embedded XOR. Gordon et al. also studied a function which has an embedded XOR~\cite{Gordon,GK11}, namely, a function that simply checks whether the inputs chosen by two players (from a specified domain) are equal or not. Exploiting the idea of Gordon et al., we for the first time design two quantum protocols for these two distinct sets of functions and analyze the security issues when players are malicious unlike the existing quantum protocols~\cite{CXNWY,HMG,JWSG,He,LWJ,YXJZ,TLH,YW}.

Further, we analyze our new quantum protocols considering rational players and this is the first work on secure multi-party quantum computation in rational setting. Rational players are neither `good' nor `malicious', they are utility maximizing. Each rational party wishes to learn the output while allowing as few others as possible to learn the output. Thus, each rational party chooses to abort to maximize its utility. This rationality concept comes from game theory. 
Recently, significant effort has been given towards bridging the gap between two apparently unrelated domains, namely, cryptography and game theory~\cite{GKTZ,AL11,GroceK}. Cryptography deals with the worst case scenario, making the protocols secure against malicious behaviour of a party. However, in game theoretic perspective, a protocol is designed against the rational deviation of a party.
Very recently, Brunner and Linden~\cite{Brunner} showed a deep link between quantum physics and game theory. By bringing quantum mechanics into a class of games, known as Bayesian games, they showed that players who can use quantum resources, such as entangled quantum particles, can outperform classical players. In quantum domain, the concept of rational players in secret sharing has been first introduced in~\cite{ASPP}. In this paper, we identify that fairness in secure two-party computation in non-rational setting does not imply fairness in rational setting. In rational setting, we modify the protocols to achieve both 
fairness as well as strict Nash equilibrium~\cite{AL11,ASPP}.

\subsection{Contributions} 
Below we summarize our contributions in this work.
\begin{enumerate}
\item  For the first time in quantum domain, we identify that for non-simultaneous channel model, there exist some functions which can be computed in two-party setting with complete fairness when one of the parties acts maliciously. 
We consider two sets of functions. One set consists of the functions without embedded XOR, whereas the other set deals with a specific function having an embedded XOR.
\item We also consider rational adversaries and modify our protocols accordingly to achieve both fairness and strict Nash equilibrium. To our knowledge, ours is the first work on secure multi-party quantum computation in the rational 
setting. 
\item Our protocols are secure against both Byzantine as well as Fail-stop adversaries in both non-rational and rational settings.
\end{enumerate}

\subsection{Key Differences from Prior Works}
Here we highlight the key differences of our protocols from the existing quantum protocols for secure two and multi-party computation.
\begin{enumerate}  
\item  Lo~\cite{Lo} showed that, there are certain functions for which two-sided secure two-party quantum computation is impossible if one of them is malicious. Ben -Or et al.~\cite{Ben} proved that assuming pairwise quantum channels and classical broadcast channels among the $n$ players, a universally composable, statistically secure multi-party quantum computation is possible for less than $\frac{n}{2}$ faulty players. On the other hand, we identify that in non-simultaneous channel model, both the millionaires' and the embedded XOR problem can be solved in quantum domain with complete fairness when one of the 
parties is malicious. 
\item Our protocols differ from the existing quantum protocols for private comparison~\cite{CXNWY,HMG,JWSG,He,LWJ,YXJZ,TLH,YW} in the sense that all these protocols analyze the security issues against several eavesdropping strategies. None of those consider malicious players. Contrary to this, we analyze the security of our protocols considering malicious behaviour of the players. In our protocols there are no external adversary. 
\end{enumerate}

\section{Preliminaries} 
\label{pre}In this section we explain what is meant by functionality, two-party computation, ideal and real world model, security of a protocol, Byzantine and fail-stop adversary used in this work. We also define fairness in non-rational as well as rational settings. We identify that when we move from one model to another, the definition of fairness changes. Further, we define strict Nash equilibrium for two players game in the rational setting.

\subsection{Functionality}
In classical domain and in two-party setting, a functionality $\mathcal{F}=\{f_\lambda\}_{\lambda\in\mathbb{N}}$ is a sequence of randomized processes, where $\lambda$ is the security parameter and $f_\lambda$ maps pairs of inputs to pairs of outputs (one for each party). Explicitly, we can write $f_\lambda=(f^1_{\lambda},f^2_{\lambda})$, where $f^1_{\lambda}$ (resp. $f^2_{\lambda}$) represents the output of the first party, say $P_1$ (resp. output of the second party, say $P_2$). The domain of $f_\lambda$ is $X_\lambda \times Y_\lambda$, where $X_\lambda$ (resp. $Y_\lambda$) denotes the possible inputs of the first (resp. second) party. If the domain sizes $|X_\lambda|$ and $|Y_\lambda|$ are polynomial in $\lambda$, then we say that $\mathcal{F}$ is defined over polynomial size domains. If each $f_\lambda$ is deterministic we say that each $f_\lambda$ as well as the collection $\mathcal{F}$ is a function. 

\subsection{Two-Party Computation}
In classical domain, the two-party computation of a functionality $\mathcal{F}=\{f_\lambda^{1},f_\lambda^{2}\}$ is defined as follows.
If a party $P_1$ is holding $1^{\lambda}$ and a input $x\in{X_\lambda}$ and a party $P_2$ is holding $1^{\lambda}$ and a input $y\in{Y_\lambda}$, then the joint distribution of the outputs of the parties is statistically close to $(f_\lambda^{1}(x,y),f_\lambda^{2}(x,y))$. 

\subsection{Ideal vs. Real World model} 
In {\em ideal world model} we assume that there is an incorruptible trusted third party who computes the function in behaves of $P_1$ and $P_2$. $P_1$ and $P_2$ send their inputs to the TTP who computes the functionality and returns the value to each party. On the other hand, in {\em real world model} there is no trusted party to compute the functionality, rather a protocol is executed to compute the functionality.

Here, along the same line as~\cite{Gordon,GK11}, we assume a {\em hybrid world model}, where there is a trusted third party who computes the function like in the ideal world and distributes the shares of the function's output like a dealer in secret sharing~\cite{Shamir} between the players. The players construct the output by exchanging their shares. In our hybrid world model we call the TTP as a {\em dealer}.

The security of a protocol depends upon what an adversary can do during the real protocol execution. In ideal world, as there is an incorruptible trusted third party who computes the function and sends the output to the participants the computation is secure by definition. However, in real world model there is no trusted party. If the adversary who exists in the real model can do no more harm than the ideal scenario, then we say that the protocol is secure. 

\subsection{Fail-stop and Byzantine Adversarial Model}
In the fail-stop setting, each party follows the protocol as directed except that it may choose to abort at any time~\cite{GroceK} and a party is assumed not to change its input when running the protocol. On the other hand, in Byzantine setting, a deviating party may behave arbitrarily. It may change the inputs or may choose to abort. Since Byzantine adversary covers all the characteristics of a fail-stop  adversary, it is very natural to consider only Byzantine setting. If a protocol is secure against a Byzantine adversary, it must be secure against a fail-stop adversary. Hence, throughout the paper we analyze the security issues against Byzantine adversary only. 

\subsection{Security in Non-rational Setting}
In non-rational setting, the move of a player is decided by his adversarial nature not by his utility function; whereas in rational setting every move of a player is guided by his utility. 

\subsubsection{\bf {Fairness}} 
For fairness in non-rational setting, we need to introduce some terminologies. Let us assume that $P_1$ begins by holding an input $x\in X$ and $P_2$ begins by holding an input $y\in Y$, and $z \in \{0,1\}^*$ is the auxiliary input of the adversary. Let $\{IDEAL_{\mathcal{F},\mathcal{S}(z)}(x,y)\}_{(x,y)\in X\times Y, z\in\{0,1\}^*}$ represent a pair of two random variables denoted by $VIEW$ and $OUT$, where $VIEW_{ideal}(x,y)$ represents the output of the party who is corrupted by the adversary $\mathcal{S}$ and $OUT_{ideal}(x,y)$ represents the output of the honest party in the ideal world. Thus, we can write 
\begin{eqnarray*}
\{IDEAL_{\mathcal{F},\mathcal{S}(z)}(x,y)\}_{(x,y)\in X\times Y, z\in\{0,1\}^*}\\
 = (VIEW_{ideal}(x,y), OUT_{ideal}(x,y)).
\end{eqnarray*}
Similarly, let $\{REAL_{ \bf\Pi,\mathcal{A}(z)}(x,y)\}_{(x,y)\in X\times Y, z\in\{0,1\}^*}$ represents a pair of two random variables, namely $VIEW_{real}(x,y)$ and $OUT_{real}(x,y)$, where $VIEW_{real}(x,y)$ denotes the random variable in real world consisting of the view of the player corrupted by the adversary $\mathcal{A}$ and $OUT_{real}(x,y)$ represents the random variable consisting of the  output of the honest party in the real world~\cite{lindell}. 
\begin{definition}{(Fairness)} A protocol ${\bf\Pi}$ is said to securely compute a functionality $\mathcal {F}$ with complete fairness if for every adversary $\mathcal {A}$,  having unbounded power of computation in the real model, there exits an adversary, $\mathcal{S}$, with same computational complexity in the ideal model such that
\begin{eqnarray*}
& & \{IDEAL_{\mathcal{F},\mathcal{S}(z)}(x,y)\}_{(x,y)\in X\times Y, z\in\{0,1\}^*}\\
& {\equiv} & \{REAL_{ \bf\Pi,\mathcal{A}(z)}(x,y)\}_{(x,y)\in X\times Y, z\in\{0,1\}^*}.
\end{eqnarray*}
\end{definition}
Note that, here we do not require a security parameter $\lambda$ as we consider our adversary has unbounded power of computation. 

In our hybrid model, the fairness condition is as follows.
\begin{definition}{(Fairness)} A protocol ${\bf\Pi}$ is said to securely compute  a functionality $\mathcal {F}$ with complete fairness if for every adversary $\mathcal {A}$,  having unbounded power of computation in the hybrid model, there exits an adversary, $\mathcal{S}$, with same computational complexity in the ideal model such that
\begin{eqnarray*}
& & \{IDEAL_{\mathcal{F},\mathcal{S}(z)}(x,y)\}_{(x,y)\in X\times Y, z\in\{0,1\}^*}\\
& {\equiv} & \{HYBRID_{ \bf\Pi,\mathcal{A}(z)}(x,y)\}_{(x,y)\in X\times Y, z\in\{0,1\}^*}.
\end{eqnarray*}
\end{definition}
here, $REAL$ is replaced by $HYBRID$ which is the random variable  consisting of the view ($VIEW$) of the adversary and the output ($OUT$) of the honest party in the hybrid world in the same manner as above.

\subsection{Rational Setting and its Security}
 We define a {\em function reconstruction protocol with rational players}
to be a pair $(\Gamma, \overrightarrow{\sigma})$, where $\Gamma$ is the game (i.e., specification of allowable actions) and $\overrightarrow{\sigma}$=$(\sigma_1,\ldots,\sigma_n)$ denotes
the strategies followed by $n$ number of players. We use the notations
$\overrightarrow{\sigma}_{-w}$ and $(\sigma'_w,\overrightarrow{\sigma}_{-w})$
respectively
for $(\sigma_1,\ldots,\sigma_{w-1},\sigma_{w+1},\ldots,\sigma_n)$ and $(\sigma_1,\ldots,\sigma_{w-1},\sigma'_w,\\
\sigma_{w+1},\ldots,\sigma_n)$.
The outcome of the game is denoted by $\overrightarrow{o}(\Gamma, \overrightarrow{\sigma})$=$(o_1,\ldots,o_n)$. The set of possible outcomes with
respect to a party $P_w$ is as follows.
1) $P_w$ correctly computes $f$, while others do not; 2) everybody correctly
computes $f$; 3) nobody computes $f$; 4) others computes $f$ correctly, 
while $P_w$ does not.

The output that no function is computed is denoted by $\perp$ (i.e., {\em null}
as in~\cite{Gordon}).

\subsubsection{{\bf Utilities and Preferences}}
The utility function $u_w$ of each party $P_w$ is defined over the set of 
possible outcomes of the game. The outcomes and corresponding utilities for two
parties are described in Table~\ref{table: OutcomesRSS}. We here assume that the utility values are real.  

\begin{table}[htbp]
\caption{Outcomes and Utilities for $(2,2)$ rational function reconstruction}
\label{table: OutcomesRSS}
\begin{center}
\begin{tabular}{llll}
\hline\noalign{\smallskip}
$P_1$\rq s outcome & $P_2$\rq s outcome & $P_1$\rq s Utility & $P_2$\rq s Utility\\
$(o_1)$ & $(o_2)$ & $U_1(o_1, o_2)$ & $U_2(o_1, o_2)$\\
\hline
\noalign{\smallskip}
$o_1$=$f$ & $o_2$=$f$ & $U_1^{TT}$ & $U_2^{TT}$\\
$o_1$=$\perp$ & $o_2$=$\perp$ & $U_1^{NN}$ & $U_2^{NN}$\\
$o_1$=$f$ & $o_2$=$\perp$ & $U_1^{TN}$ & $U_2^{NT}$\\
$o_1$=$\perp$ & $o_2$=$f$ & $U_1^{NT}$ & $U_2^{TN}$\\
\hline
\end{tabular}
\end{center} 
\end{table}

Players have their preferences based on different possible outcomes. In this work, a rational player $w$ is assumed to have the following preference:
$$\mathcal{R}_1 : U_w^{TN}> U_w^{TT}>U_w^{NN}>U_w^{NT}.$$ 

\subsubsection{\bf {Fairness}}\label{fair} In non-rational setting, the security of a protocol is analyzed by comparing 
what an adversary can do in a real protocol execution to what it can do in an ideal scenario that is secure by definition~\cite{lindell,Gordon,GK11}. This is formalized by considering an ideal computation involving an incorruptible trusted party to whom the parties send their inputs. The trusted party computes the functionality on the inputs and returns to each party its respective output. Loosely speaking, a protocol is secure
if any adversary interacting in the real protocol (where no trusted party exists) can do no more harm than if it were involved in the above-described ideal computation.

A rational player, being selfish, desires an unfair outcome, i.e., computing the function alone. Therefore, the basic aim of rational computation has been to achieve fairness. 
According to Von Neumann and Morgenstern {\em expected utility theorem}~\cite{VM44}, under natural assumptions, the individual would prefer one  prospect $\mathcal{O}_1$ over another prospect $\mathcal{O}_2$ if and only if $E[U(\mathcal{O}_1) \geq E[U(\mathcal{O}_2)]$.
The work~\cite{AL10} implicitly uses the expected utility theorem to derive its
results. We also use the same approach and accordingly redefine fairness
as follows.
\begin{definition}   
(Fairness) A rational function reconstruction mechanism $(\Gamma,\overrightarrow{\sigma})$ is said to be completely fair if  for a party $P_w$, $(w\in{\{1,2\}})$, who is corrupted by an adversary having unbounded power of computation, the following holds:
$$
U_w^{TT} \geq E[U_w (\mathcal{O}_l)], 
$$
where $\mathcal{O}_l = \{o_{w}^{1}, \ldots, o_{w}^{n'};  p_1, \ldots, p_{n'}\}$ is any prospect when the player deviates from the suggested strategy and $n'$ is the number of possible outcomes. 
\end{definition}

\subsubsection{ \bf {Strict Nash Equilibrium}}
Now, we define Nash equilibrium for two players game. A suggested strategy $\overrightarrow{\sigma}$ of a mechanism $(\Gamma,\overrightarrow{\sigma})$ is said to be in Nash equilibrium when there is no incentive for a player $P_w$, $w \in \{1,2\}$ to deviate from the suggested strategy, given that other player is following its suggested strategy. There are many variants of Nash equilibrium in game theory literature~\cite{AL11}. However, in the quantum domain, the players are assumed to have unbounded computational power and hence the relevant equilibrium is the strict Nash equilibrium~\cite{AL11,ASPP}. We recall its definition below.
\begin{definition}
(Strict Nash equilibrium) The suggested strategy $\overrightarrow{\sigma}$ in the mechanism $(\Gamma,\overrightarrow{\sigma})$ is a strict Nash equilibrium, if for every player $P_w$, $w \in \{1,2\}$, who possesses unbounded power of computation and for any strategy $\sigma'_w$ which deviates from the suggested strategy $\overrightarrow{\sigma}$, we have $u_w (\sigma'_w,\overrightarrow{\sigma}_{-w} ) < u_w (\overrightarrow{\sigma})$.
\end{definition}

\section{Revisiting the Millionaires' Problem~\cite{Gordon}}
\label{Byz}
In this section, we first describe the millionaires' problem or more precisely, the greater than function, proposed by Gordon et al.~\cite{Gordon,GK11}. 
Let us denote two players by $P_1$ and $P_2$. As we deal with hybrid model, there is a trusted party whom we call dealer. 
Suppose $P_1$ has the secret $i$ and $P_2$ has the secret $j$, 
$1 \leq i \leq M$, $1 \leq j \leq M$, where $M$ is an integer. The dealer gives an ordered list
$X = \{x_1, x_2, \ldots, x_M\}$ to $P_1$ and another ordered list
$Y = \{y_1, y_2, \ldots, y_M\}$ 
to $P_2$. Then $P_1$ sends $x_i$ to the dealer and $P_2$ sends $y_j$ to
the dealer. Let $f$ be a deterministic function which maps 
$X \times Y \rightarrow \{0,1\} \times\{0,1\}$. The function
$f(x_i,y_j)$ can be defined as a pair of outputs, i.e., 
$f (x_i,y_j) = (f_1(x_i,y_j), f_2(x_i,y_j))$, 
where $f_1(x_i,y_j)$ is the output of the first party $P_1$ and $f_2(x_i,y_j)$ is 
the output of the second party $P_2$.
For millionaires' problem, the function is defined as follows~\cite{Gordon,GK11}.
For $w = 1, 2$,
\begin{equation}
\label{eqfw}
 f_w(x_i, y_j) =
  \begin{cases}
   1 & \text{if } i > j; \\
   0 & \text{if } i \leq j.
  \end{cases}
\end{equation}
The protocol proceeds in a series of $M$ iterations. 
The dealer creates two sequences $\{a_l\}$ and $\{b_l\}$,
$l = 1, 2, \ldots, M$, as follows.
$$a_i = b_j = f_1(x_i, y_j) = f_2(x_i, y_j).$$
For $l \neq i$, $a_l = \perp$ and for $l \neq j$, $b_l = \perp$.

Next, the dealer splits the secret $a_l$ into the shares $a^1_l$ and $a^2_l$,
and the secret $b_l$ into the shares $b^1_l$ and $b^2_l$, so that
$a_l = a^1_l \oplus a^2_l$ and 
$b_l = b^1_l \oplus b^2_l$, 
and gives
the shares $\{(a^1_l$, $b^1_l)\}$ to $P_1$ and
the shares $\{(a^2_l$, $b^2_l)\}$ to $P_2$.
In each round $l$, $P_2$ sends $a^2_l$ to $P_1$, who, in turn
sends $b^1_l$ to $P_2$. $P_1$ learns the output value $f_1(x_i, y_j)$ in iteration $i$, and $P_2$  learns the output value $f_2(x_i,y_j)$ in iteration $j$. In a round $l \neq i$ $P_1$ outputs $\perp$ and in a round $l \neq j$ $P_2$ output $\perp$.
As we require three elements, $0$, $1$ and $\perp$, we define $0$ by $00$, $1$ by $11$ and $\perp$ by $01$. Note that the dealer who will distribute the shares is honest and can compute the function described in Equation~\eqref{eqfw}.

The algorithms in the Byzantine setting is same as the fail-stop setting except some additional steps. In Byzantine setting, the shares are signed by the dealer. As explained in~\cite{Gordon,GK11}, exploiting the MAC signature, we can resist the players to send a false share. 

\section{Quantum Solution of Millionaires' Problem in Non-Rational Setting}
\label{nonrational}
In this section, we propose a quantum version of millionaires' problem. It is the quantum analogue of the protocol of Gordon et al. in classical domain~\cite{Gordon,GK11}. However, their security proof is based on some computational hardness in classical domain. Whereas we exploit the property of entanglement to provide security of the protocol in the quantum domain.

Here, we exploit four Bell state basis~\cite{nc04}. 
The maximally entangled two particle state is $\ket{g_0} = \frac{1}{\sqrt{2}}\Big [\ket {0}_1\ket{0}_2 + \ket{1}_1\ket{1}_2\Big]$. This state is called Einstein, Podolsky,  Rosen Pair, in short $EPR$ pair or Bell  state. There are four independent Bell states. They are
{\scriptsize
\begin{eqnarray*}
\ket{g_0} = \frac{1}{\sqrt{2}}\Big[\ket{0}_1\ket{0}_2 + \ket{1}_1\ket{1}_2\Big],
\ket{g_1} = \frac{1}{\sqrt{2}}\Big[\ket{0}_1\ket{0}_2 - \ket{1}_1\ket{1}_2\Big],\\
\ket{g_2} = \frac{1}{\sqrt{2}}\Big[\ket{0}_1\ket{1}_2 + \ket{1}_1\ket{0}_2\Big],
\ket{g_3} = \frac{1}{\sqrt{2}}\Big[\ket{0}_1\ket{1}_2 - \ket{1}_1\ket{0}_2\Big].
\end{eqnarray*}
}
Here, subscript $1$ stands for $P_1$'s qubit and subscript $2$ stands for $P_2$'s qubit.
We need any three of these orthogonal states. In this work, without
loss of generality, we consider $\ket{g_0}$, $\ket{g_1}$ and $\ket{g_2}$.

 Like classical case, the secret of $P_1$ is $i$ and the secret of $P_2$ is $j$, $1\leq i\leq m$, $1\leq j\leq m$ where $m$ is an integer. They want to know whether $i>j$ or $i \leq j$. The dealer supplies them two ordered lists, 
$X = \{x_1, x_2, \ldots, x_m\}$ to $P_1$ and 
$Y = \{y_1, y_2, \ldots, y_m\}$  to $P_2$. $P_1$ chooses $x_i$ and $P_2$ chooses $y_j$ from their respective lists and send those to the dealer. Dealer will compute the function and will distribute the shares (here, qubits) in such a way that $P_1$ will get the value of the function i.e $f_1(x_i,y_j)$ in iteration $i$ and $P_2$ will get the value of the function i.e $f_2(x_i,y_j)$ in iteration $j$. The protocol proceeds in a series of $m$ iteration. In a round $l\neq i$ $P_1$ outputs $\perp$ and in a round $l\neq j$ $P_2$ outputs $\perp$.  The Quantum solution of the millionaires' problem in non-rational setting, is described in Algorithm~\ref{algo_qsharegen1} ($QShareGen$) 
and Algorithm~\ref{algo_fairqcomp1} ($\bf \Pi^{QMP}_{Fair}$).

\restylealgo{boxed}
\begin{algorithm}[htbp]
{\scriptsize
\textbf {Inputs:}\\
 The inputs of the $QShareGen$ are $x_ i$ from $P_1$ and $y_j$ from $P_2$. If one of the received inputs is not in the correct domain, then both the parties are given $\perp$.\\
\textbf{Computation:}\\
Dealer does the following:\\
\begin{enumerate}
\item \hspace*{2mm} (a) If $f_1(x_i, y_j) = f_2(x_i,y_j)= 0$, prepares two copies of $\ket{g_0}=\frac{1}{\sqrt{2}}(\ket {0}_1\ket{0}_2 + \ket {1}_1\ket{1}_2)$. We denote them as $\ket{g'_0}$ and $\ket{g''_0}$. \\
\hspace*{2mm} (b) If $f_1(x_i, y_j) = f_2(x_i,y_j) = 1$, prepares two copies of $\ket{g_1}=\frac{1}{\sqrt{2}}(\ket {0}_1\ket{0}_2 - \ket {1}\ket{1}_2)$. We denote them as $\ket{g'_1}$ and $\ket{g''_1}$.\\ 
\item For each $l \in \{1, \dots, m\}, l \neq i$ and $l \neq j$, prepares two copies of  $\ket{g_2}=\frac{1}{\sqrt{2}}(\ket {0}_1\ket{1}_2 + \ket {1}_1\ket{0}_2)$.\\
\item For $l = i$, prepares one copy of  $\ket{g_2}=\frac{1}{\sqrt{2}}(\ket {1}_1\ket{0}_2 + \ket {0}_1\ket{1}_2)$. We call that $\ket{g'_2}$.\\
\item For $l=j$, prepares one copy of  $\ket{g_2}=\frac{1}{\sqrt{2}}(\ket {1}_1\ket{0}_2 + \ket {0}_1\ket{1}_2)$. We call that $\ket{g''_2}$.\\
\end{enumerate}
\textbf{Output:}\\
\begin{enumerate}
\item For  $l \in \{1, 2, \dots, m\}$ dealer 
prepares a list $list_w$ of shares for each party $P_w$, where $w \in \{1,2\}$ such that for each round each player is given two qubits, marked as $1$st and $2$nd, from two different entangled states.\\
\hspace*{2mm}(a) when $l=i$, $P_1$ is given the first half from $\ket{g'_0}$ or $\ket{g'_1}$ depending on the value of $f_1(x_i,y_j)$ and the first half from the entangled state $\ket{g'_2}$. $P_2$ is given the other halves. For each party, the qubit from $\ket{g'_0}$ or $\ket{g'_1}$ is marked as $1$st qubit for that round and the qubit from $\ket{g'_2}$ is marked as $2$nd qubit for that round.\\
\hspace*{2mm}(b) when $l=j$, $P_2$ is given the second half from $\ket{g''_0}$ or $\ket{g''_1}$ depending on the value of $f_2(x_i,y_j)$ and the second half from the entangled state $\ket{g''_2}$. $P_1$ is given the first halves. For each party, the qubit from $\ket{g''_0}$ or $\ket{g''_1}$ is marked as $2$nd qubit for that round  and the qubit from $\ket{g''_2}$ is marked as $1$st qubit for that round. \\
\hspace*{2mm}(c) for all other rounds, $P_1$ is given the first halves from two different $\ket{g_2}$ states, whereas $P_2$ is given the other halves from the same entangled states. For each party the qubits are marked such a way that the $1$st (resp. $2$nd) qubit of $P_1$ is correlated with the $1$st (resp. $2$nd) qubit of $P_2$.\\  
\hspace*{2mm}(d) each list contains $2m$ number of qubits.\\
\end{enumerate}
}
\caption{$QShareGen$}
\label{algo_qsharegen1}
\end{algorithm}

\restylealgo{boxed}
\begin{algorithm}[htbp]
{\scriptsize
\textbf{Inputs:}\\
 Each of $P_1$ and $P_2$ receives his corresponding list of shares.\\
\textbf{Computation:}\\
The players do the following.\\
\begin{enumerate}
\item Each round is subdivided into two sub-rounds.\\
\item In first sub-round, $P_2$ sends the first qubit of its list for that round to $P_1$. \\
\item In second sub-round, $P_1$ sends the second qubit of its list for that round to $P_2$. \\
\item After receiving the qubits from $P_2$, $P_1$ measures the two qubits in Bell basis.\\
\hspace*{2mm}(a) If $l \neq i$ and the measurement result is $\ket{g_0}$ or $\ket{g_1}$ or $\ket{g_3}$, aborts the protocol and reports forgery by $P_2$. If it is $\ket{g_2}$, concludes $\perp$. \\
\hspace*{2mm}(b) If $l=i$ and the measurement result is $\ket{g_2}$ or $\ket{g_3}$, then aborts the protocol and reports forgery by $P_2$. If the measurement result is $\ket{g_0}$, concludes $f_1(x_i,y_j) = 0$. If it is $\ket{g_1}$, concludes $f_1(x_i,y_j) = 1$. \\ 
\item  After receiving the qubits from $P_1$, $P_2$ measures the two qubits in Bell basis.\\
\hspace*{2mm}(a) If $l \neq j$ and the measurement result is $\ket{g_0}$ or $\ket{g_1}$ or $\ket{g_3}$, aborts the protocol and reports forgery by $P_1$. f it is $\ket{g_2}$, concludes $\perp$. \\
\hspace*{2mm}(b) If $l=j$ and the measurement result is $\ket{g_2}$ or $\ket{g_3}$, then aborts the protocol and reports forgery by $P_1$. If the measurement result is $\ket{g_0}$, concludes $f_2(x_i,y_j) = 0$. If it is $\ket{g_1}$, concludes $f_2(x_i,y_j) = 1$. \\ 
\end{enumerate}
\ \\
\textbf{Output:}\\
\begin{enumerate}
\item $P_1$ obtains its output value i.e either $0$ or $1$ depending upon $f_1(x_i,y_j)$ in iteration $i$ whereas $P_2$ obtains its output value i.e either $0$ or $1$ depending upon $f_2(x_i,y_j)$ in iteration $j$.\\
\item If $P_2$ aborts in round $l$, i.e., does not send its share at that round and $l \leq i$, $P_1$ outputs $1$. If $l > i$, $P_1$ has already determined the output in iteration $i$. Thus it outputs that value. \\
\item  If $P_1$ aborts in round $l$, i.e., does not send its share at that round and $l \leq j$, $P_2$ outputs $0$. If $l > j$, $P_2$ has already determined the output in iteration $j$. Thus it outputs that value. \\
\end{enumerate}
}
\caption{$\bf\Pi^{QMP}_{Fair}$}
\label{algo_fairqcomp1}
\end{algorithm}

\subsection{Security Analysis}
A Byzantine player can behave arbitrarily. He can manipulate the shares (here, qubits) which he has obtained from the dealer or may abort early. In this subsection we will show how entanglement provides the security against such manipulation. The aborting case will be discussed next.

\subsubsection{\bf{Security against Forgery}}
\label{Byz_non} 
Without loss of generality, let us assume that $P_1$ tries to manipulate the qubits obtained from the dealer in the motivation to convey the wrong message to $P_2$. Here, manipulation means sending arbitrary qubit or swapping the qubits of his list. This forgery is detected with significant probability. Here, we assume that $P_1$ sends an arbitrary qubit to $P_2$ in a round $l$. The analysis will be same if we consider the swapping of the qubits of his list.

Like classical MAC signature, in quantum domain, entanglement provides security against such forgery.
According to the protocol, in round $l \neq j$, if no cheating occurs, then $P_2$ will get $\ket{g_2} = \frac{1}{\sqrt{2}}\Big[\ket{0}_1\ket{1}_2 + \ket{1}_1\ket{0}_2\Big].$ In terms of density matrix it can be written as $$\rho=\frac{1}{2}\Big(\ket{0}_1\ket{1}_2 + \ket{1}_1\ket{0}_2\Big)\Big(\bra{0}_1\bra{1}_2 + \bra{1}_1\bra{0}_2\Big).$$ Now, let us assume that $P_1$ sends an arbitrary qubit which is $\ket{\phi}=[\alpha\ket{0}_3 + \beta\ket{1}_3]$, instead of the correct one. In terms of density matrix, the arbitrary state can be written as
\begin{eqnarray*}
\rho_3 = \ket{\phi}\bra{\phi}
& = & \Big[|\alpha|^2\ket{0}_3\bra{0} + \alpha^*\beta\ket{1}_3\bra{0}+\alpha\beta^*\ket{0}_3\bra{1}\\
& & + |\beta|^2\ket{1}_3\bra{1}\Big].
\end{eqnarray*}
 Thus, the state at the end of $P_2$ would be
\begin{eqnarray*}
\rho_{2} & = & [tr_{P_1}(\rho)](\rho_{3}) \\
& = & \frac{1}{{2}}\Big[\ket{1}_2\bra{1}\Big(|\alpha|^2\ket{0}_3\bra{0} +
\alpha^*\beta\ket{1}_3\bra{0} + \alpha\beta^*\ket{0}_3\bra{1}\\
&  & + |\beta|^2\ket{1}_3\bra{1}\Big)+ \ket{0}_2\bra{0}\Big(|\alpha|^2\ket{0}_3\bra{0}  + \alpha^*\beta\ket{1}_3\bra{0}\\
& & + \alpha\beta^*\ket{0}_3\bra{1}+|\beta|^2\ket{1}_3\bra{1}\Big)\Big].\\ 
\end{eqnarray*}

In this case, when $P_2$ will measure qubit $2$ and  qubit $3$ in Bell basis, after measurement, $P_2$ will get either $\ket{g_0}$ or $\ket{g_1}$ or $\ket{g_2}$ or $\ket{g_3}$ with probability $\frac{1}{4}$ instead of $\ket{g_2}$ only. 
The detailed calculations are given here. For the rest of the paper, we will refer this section.

Let us assume that $P_2$ obtains $\ket{g_0}$ after measurement. Thus, the probability that $P_2$ obtains $\ket{g_0}$ is given by 
\begin{eqnarray*}
&& \ket{g_0}_{23}\bra{g_0}(\rho_{2})
= \bra{g_0}\rho_{2}\ket{g_0}_{23}\\
& = & \frac{1}{4}\Big[\Big(\bra{00}_{23}+\bra{11}_{23}\Big)
\Big[\ket{1}_2\bra{1}\Big(|\alpha|^2\ket{0}_3\bra{0}\\
& & + \alpha^*\beta\ket{1}_3\bra{0}+\alpha\beta^*\ket{0}_3\bra{1}
+|\beta|^2\ket{1}_3\bra{1}\Big) \\
& & + \ket{0}_2\bra{0}\Big(|\alpha|^2\ket{0}_3\bra{0} + \alpha^*\beta\ket{1}_3\bra{0}\\
& & +\alpha\beta^*\ket{0}_3\bra{1} +|\beta|^2\ket{1}_3\bra{1}\Big)\Big]
\Big(\ket{00}_{23}+\ket{11}_{23}\Big)\Big]\\
& = & \frac{1}{4}\Big[|\alpha|^2+|\beta|^2\Big]
 = \frac{1}{4}.
\end{eqnarray*}
If $l \neq j$, according to our protocol, $P_2$ should get $\ket{g_2}$ only. But as $P_1$ sends an arbitrary qubit to $P_2$, when measured, $P_2$ gets any one of the four Bell states with probability $\frac{1}{4}$. Thus, if $l \neq j$ and $P_2$ gets $\ket{g_0}$ or $\ket{g_1}$ or $\ket{g_3}$, he immediately concludes that $P_1$ is cheating. The success probability of detecting such cheating for a round $l \neq j$ is $\frac{3}{4}$. 

Similarly, when $l=j$, if $P_1$ does not cheat, $P_2$ would get either $\ket{g_0}$ or $\ket{g_1}$ depending on the value of $f_2(x_i,y_j)$. However, if $P_1$ cheats, when measured, $P_2$ will get any one Bell state. In case of $\ket{g_0}$ and $\ket{g_1}$, he can not detect the cheating because he does not know the value of $f_2(x_i,y_j)$ a priori. However, if he gets $\ket{g_2}$ or $\ket{g_3}$, he immediately detects the cheating with certainty. Thus, the success probability of detecting the cheating when $l=j$ is $\frac{1}{2}$. As, $P_1$ have no idea about the value of $j$, the average success probability of detecting such cheating is
\begin{eqnarray*}
\frac{3}{4}\Pr(l \neq j)+\frac{1}{2}\Pr(l=j) 
 =\frac{3}{4}[\Pr(l<j)+\Pr(l>j)]\\
 + \frac{1}{2}\Pr(l=j).
\end{eqnarray*}

We do not bother about $ \Pr(l>j)$ because, $P_2$ should have no incentive to detect the cheating when $l>j$, as he has already got his output value in round $j$. Thus the total success probability of $P_2$  to detect such cheating is
\begin{equation*}
\frac{3}{4}\Pr(l<j) +  \frac{1}{2}\Pr(l=j)\\
= \frac{3}{4}.\frac{j-1}{m} + \frac{1}{2}.\frac{1}{m}
= \frac{3j-1}{4m}
\end{equation*}

\begin{theorem}
In non-rational setting, the success probability of $P_2$ to detect cheating by $P_1$ who is corrupted by a Byzantine adversary in an arbitrary round $l$  is $\frac{3j-1}{4m}$.
\end{theorem}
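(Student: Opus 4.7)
The plan is to trace through what happens when $P_1$ substitutes an arbitrary single-qubit state for his intended transmission in round $l$, and then split the analysis according to whether $l = j$ or $l \neq j$, averaging the per-round detection probabilities over the possible values of $l$ about which $P_2$ actually cares.

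First I would fix notation exactly as in the paragraph preceding the theorem: $P_2$'s retained qubit is half of an entangled pair (either a $\ket{g_2}$ when $l \neq j$, or a $\ket{g_0}/\ket{g_1}$ when $l = j$), and $P_1$'s injected state is an arbitrary pure qubit $\ket{\phi} = \alpha\ket{0} + \beta\ket{1}$. Tracing out $P_1$'s system reduces $P_2$'s half to the maximally mixed state $\tfrac{1}{2}(\ket{0}\bra{0}+\ket{1}\bra{1})$, and tensoring with $\rho_3 = \ket{\phi}\bra{\phi}$ yields the two-qubit state on which $P_2$ performs the Bell measurement. The key computation is then to project this product state onto each of the four Bell basis vectors; the sample calculation shown in the text for $\ket{g_0}$ gives $\tfrac{1}{4}(|\alpha|^2+|\beta|^2)=\tfrac{1}{4}$, and exactly the same algebra (with at most sign changes) yields $\tfrac{1}{4}$ for each of $\ket{g_1}$, $\ket{g_2}$, $\ket{g_3}$. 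So I would simply note that the outcome is uniform over the four Bell states, independent of $\alpha, \beta$.

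Next I would translate the uniform outcome distribution into a per-round detection probability. When $l \neq j$, the honest outcome is $\ket{g_2}$ so any of the other three outcomes convicts $P_1$, giving detection probability $3/4$. When $l = j$, the honest outcome is $\ket{g_0}$ or $\ket{g_1}$ (and $P_2$ does not know a priori which); only $\ket{g_2}$ and $\ket{g_3}$ unambiguously expose forgery, yielding detection probability $1/2$. The one subtle point worth highlighting is the restriction to rounds $l \leq j$ in the averaging: for $l > j$ the honest party has already decoded his output in round $j$ and has no incentive to continue checking, so those rounds are discarded from the success analysis.

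Finally, assuming $P_1$ has no information about $j$ and therefore treats $l$ as uniform over $\{1,\dots,m\}$, I would combine the two cases by weighting:
\begin{equation*}
\Pr[\text{detect}] \;=\; \tfrac{3}{4}\Pr(l<j) + \tfrac{1}{2}\Pr(l=j) \;=\; \tfrac{3}{4}\cdot\tfrac{j-1}{m} + \tfrac{1}{2}\cdot\tfrac{1}{m} \;=\; \tfrac{3j-1}{4m},
\end{equation*}
which is the claimed bound. The only genuinely non-mechanical step is the first one, verifying that the reduced state on $P_2$'s side really is maximally mixed so that the Bell-measurement outcome distribution becomes independent of the adversarial choice $(\alpha,\beta)$; everything after that is bookkeeping.
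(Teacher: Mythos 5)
Your proposal is correct and follows essentially the same route as the paper: reduce $P_2$'s half to the maximally mixed state, observe that the Bell-measurement outcome is uniform over the four Bell states independent of $(\alpha,\beta)$, obtain per-round detection probabilities $3/4$ for $l\neq j$ and $1/2$ for $l=j$, discard rounds $l>j$, and average with $\Pr(l<j)=\tfrac{j-1}{m}$ and $\Pr(l=j)=\tfrac{1}{m}$ to get $\tfrac{3j-1}{4m}$. No gaps relative to the paper's own argument.
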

Same conclusion can be drawn when we assume $P_2$ is corrupted. In this case, we modify the theorem in the following way.
\begin{theorem}
In non-rational setting, the success probability of $P_1$ to detect cheating by $P_2$ who is corrupted by a Byzantine adversary in an arbitrary round $l$ is $\frac{3i-1}{m}.$
\end{theorem}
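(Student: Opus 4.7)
The plan is to invoke the symmetry of the protocol between $P_1$ and $P_2$ and mirror the argument given in the paragraphs preceding the previous theorem. In Algorithm $\bf\Pi^{QMP}_{Fair}$ the two players play structurally identical roles under the swap $(P_1, i) \leftrightarrow (P_2, j)$: each party receives one half of a $\ket{g_2}$ (or of $\ket{g_0}$/$\ket{g_1}$ in the designated round) and sends one qubit per round. I would therefore begin by noting that the density-matrix calculation used above applies verbatim, with subscripts and roles exchanged, when $P_2$ substitutes an arbitrary single-qubit state $\ket{\phi}=\alpha\ket{0}+\beta\ket{1}$ for the genuine share that $P_1$ expects in some round $l$.

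Next I would split into cases on $l$. In any round $l\neq i$ the honest protocol would have $P_1$ measure $\ket{g_2}$, but after the substitution a Bell-basis measurement at $P_1$ returns each of $\ket{g_0},\ket{g_1},\ket{g_2},\ket{g_3}$ with probability $\frac{1}{4}$; detection therefore occurs whenever the outcome is not $\ket{g_2}$, i.e., with probability $\frac{3}{4}$. In the distinguished round $l=i$ the honest outcome is $\ket{g_0}$ or $\ket{g_1}$ depending on the a priori unknown value of $f_1(x_i,y_j)$, so only outcomes $\ket{g_2}$ and $\ket{g_3}$ expose the forgery, and detection occurs with probability $\frac{1}{2}$.

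Since $P_2$ has no information about $i$ when selecting a round in which to cheat, the final step is to average these conditional probabilities over $l$ drawn uniformly from $\{1,\dots,m\}$. As in the preceding theorem, rounds $l>i$ can be discarded because $P_1$ has already concluded its output in round $i$ and has no stake in detecting later cheating. Putting the pieces together yields
\begin{equation*}
\frac{3}{4}\Pr(l<i)+\frac{1}{2}\Pr(l=i)=\frac{3}{4}\cdot\frac{i-1}{m}+\frac{1}{2}\cdot\frac{1}{m}=\frac{3i-1}{4m},
\end{equation*}
which is the exact symmetric analogue of $\frac{3j-1}{4m}$ obtained for $P_2$. The only step requiring any care is the justification for discarding $l>i$: this is sound because $P_1$'s output is settled at round $i$, and a forgery in a later round at worst contests a $\perp$ label that cannot overwrite the already-obtained value. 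No new quantum machinery beyond what was used for the preceding theorem is required, so I expect no serious obstacle beyond a careful statement of the role swap.
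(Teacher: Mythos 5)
Your proposal is correct and takes exactly the route the paper intends: the paper's own ``proof'' of this theorem is a one-line appeal to symmetry (``Same conclusion can be drawn when we assume $P_2$ is corrupted''), relying on the detailed density-matrix computation and the case split ($\frac{3}{4}$ detection for $l\neq i$, $\frac{1}{2}$ for $l=i$, rounds $l>i$ discarded) already carried out for the mirror case. Your only substantive divergence is in the final value: you obtain $\frac{3i-1}{4m}$, whereas the theorem as printed claims $\frac{3i-1}{m}$. Your value is the right one --- the computation $\frac{3}{4}\cdot\frac{i-1}{m}+\frac{1}{2}\cdot\frac{1}{m}=\frac{3i-1}{4m}$ is the exact analogue of the $\frac{3j-1}{4m}$ derived for the other direction, and nothing in the protocol breaks the $(P_1,i)\leftrightarrow(P_2,j)$ symmetry in a way that would remove the factor of $4$; the stated $\frac{3i-1}{m}$ is evidently a typographical error in the paper (a dropped $4$ in the denominator), so you should not try to force your derivation to match it.
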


\subsubsection{\bf{Fairness against Early Abort}} 
As $P_1$ is always computing its output first followed by $P_2$, the aborting of $P_1$ plays an important role to achieve the fairness of the protocol. The early abort of $P_2$ will terminate the protocol up to that round in which $P_2$ aborts. In that case, either both get the output or none gets the output. Thus, early abort of $P_2$ does not affect the fairness condition. We now concentrate on the early abort of $P_1$.

Let us assume that $P_1$ aborts in round $l$. There are two cases: $i \leq j$ and $i > j$. We analyze each case one by one.\\
\noindent {\bf Case 1: $i \leq j$}.\\
\noindent {\bf Subcase 1(a):} $l<i$. In this case, $P_1$ outputs $\perp$ and $P_2$ outputs $0$. In {\em ideal world model}, the trusted party sends $f_1(x_i,y_j)$ to $P_1$ in iteration $i$ and $f_2(x_i,y_j)$ to $P_2$ in iteration $j$. In all other rounds trusted party sends $\perp$ to both $P_1$ and $P_2$. If a party (say $P_1$) aborts the protocol in an arbitrary round $l$ after getting the output, the trusted party sends the honest party (here, $P_2$) the value of $f_2(x_l,y_j)$. Thus when $P_1$ aborts in round $l<i$, $P_1$ outputs $\perp$ whereas $P_2$ outputs $f_2(x_l,y_j)$. As $i \leq j$ and $l<i$, then $l<j$. So $f_2(x_l,y_j)= 0$ (refer to Equation~\ref{eqfw}). Hence, 

{\scriptsize
$\Pr \Big[\Big(VIEW_{ideal}(x,y), OUT_{ideal}(x,y)\Big) 
= (\perp, 0)\Big|l<i \bigwedge i \leq j \Big]$

\noindent $= \Pr \Big[\Big(VIEW_{hybrid}(x,y), OUT_{hybrid}(x,y)\Big)
= (\perp, 0)\Big| l<i \bigwedge i \leq j \Big].$}

\noindent {\bf Subcase 1(b):} $l=i$. In this case, $P_1$ obtains the correct output i.e. $0$ and $P_2$ outputs $0$. In {\em ideal model}, when $P_1$ aborts in round $l=i$, trusted party has already sent $0$ to $P_1$ and $f_2(x_i,y_j)$ to $P_2$. As $i \leq j$, $f_2(x_i,y_j)= 0$ (Equation~\ref{eqfw}). Hence, 

{\scriptsize
$\Pr\Big[\Big(VIEW_{ideal}(x,y), OUT_{ideal}(x,y)\Big)
 = (0, 0)\Big|l=i \bigwedge i \leq j \Big]$\\

\noindent $= \Pr\Big[\Big(VIEW_{hybrid}(x,y), OUT_{hybrid}(x,y)\Big)
= (0, 0)\Big| l=i \bigwedge i \leq j \Big]$.}

\noindent {\bf Subcase 1(c):} $l>i$. Here two cases can arise. i) $i< l \leq j$, in this case, $P_1$ obtains correct output and $P_2$ outputs $0$. ii) $i=j<l$, in this case, both $P_1$ and $P_2$ have already obtained $0$.  In {\em ideal model}, if $P_1$ aborts in round $l>i$, $P_1$ has already got its output value whereas trusted party sends $f_2(x_l,y_j)$ to $P_2$. When $i < l \leq j$, then $f_2(x_l,y_j)=0$ (Equation~\ref{eqfw}) whereas for $i=j$, $P_2$ has already got the correct output i.e $0$. Hence, 

{\scriptsize
 $\Pr\Big[\Big(VIEW_{ideal}(x,y), OUT_{ideal}(x,y)\Big)
 = (0, 0)\Big|l>i \bigwedge i \leq j \Big]$\\

\noindent $=\Pr\Big[\Big(VIEW_{hybrid}(x,y), OUT_{hybrid}(x,y)\Big)
 = (0, 0)\Big| l>i \bigwedge i \leq j \Big]$.}

\noindent {\bf Case 2: $i > j$}.\\
\noindent {\bf Subcase 2(a):} $l\leq j$. In this case, $P_1$ outputs $\perp$ and $P_2$ outputs $0$. In {\em ideal model}, if $P_1$ aborts in round $l\leq j$, $P_1$ outputs $\perp$ and trusted party sends $f_2(x_l,y_j)$ to $P_2$. As $l\leq j$, $f_2(x_l,y_j)= 0$ (Equation~\ref{eqfw}). Hence, 

{\scriptsize
$\Pr\Big[\Big(VIEW_{ideal}(x,y), OUT_{ideal}(x,y)\Big)
 = (\perp, 0)\Big|l\leq j \bigwedge i > j \Big]$\\

\noindent $=\Pr\Big[\Big(VIEW_{hybrid}(x,y), OUT_{hybrid}(x,y)\Big)
 = (\perp, 0)\Big| l\leq j \bigwedge i > j \Big]$.}

\noindent {\bf Subcase 2(b):} $j<l<i$. In this case, $P_1$ obtains $\perp$ and $P_2$ gets the correct output i.e. $1$. In {\em ideal model}, if $P_1$ aborts in round $j<l<i$, $P_1$ is given $\perp$ whereas the trusted party sends $f_2(x_l,y_j)$ to $P_2$. As $j<l$, then $f_2(x_l,y_j)=1$  (Equation~\ref{eqfw}). Hence, 

{\scriptsize
 $\Pr\Big[\Big(VIEW_{ideal}(x,y), OUT_{ideal}(x,y)\Big)
 = (\perp, 1)\Big|j<l<i \bigwedge i > j \Big]$\\

\noindent $=\Pr\Big[\Big(VIEW_{hybrid}(x,y), OUT_{hybrid}(x,y)\Big)
 = (\perp, 1)\Big| j<l<i \bigwedge i > j \Big]$.}

\noindent {\bf Subcase 2(c):}\label{2d} $j<l=i$. In this case, $P_1$ and $P_2$ both obtain the correct output i.e. $1$. In {\em ideal model}, if $P_1$ aborts in round $j<l=i$, $P_1$ is given $1$ whereas the trusted party sends $f_2(x_l,y_j)$ to $P_2$. As $j<l=i$, then $f_2(x_l,y_j)= f_2(x_i,y_j)=1$  (Equation~\ref{eqfw}). Hence, 

{\scriptsize
 $\Pr\Big[\Big(VIEW_{ideal}(x,y), OUT_{ideal}(x,y)\Big)
= (1, 1)\Big|j<l=i \bigwedge i > j \Big]$\\

\noindent $=\Pr\Big[\Big(VIEW_{hybrid}(x,y), OUT_{hybrid}(x,y)\Big)
 = (1, 1)\Big| j<l=i \bigwedge i > j \Big]$.}

When $i<l<m$, $P_1$ has no incentive to abort as in this case both $P_1$ and $P_2$ have already obtain their respective outputs.

Hence, from the above analysis, we can conclude that in the hybrid model, the adversary does no more harm than the ideal scenario. Thus our protocol achieve fairness in non-rational setting.

\begin{theorem}
In non-rational setting, the protocol $\bf\Pi^{QMP}_{Fair}$ achieves fairness.
\end{theorem}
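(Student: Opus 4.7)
The plan is to establish fairness via the simulation paradigm: for every (possibly Byzantine) hybrid-world adversary $\mathcal{A}$ corrupting one party, I will exhibit an ideal-world simulator $\mathcal{S}$ such that the joint distribution of the corrupted party's view and the honest party's output is identical in the two worlds. Without loss of generality I take $P_1$ as corrupted (the analysis for $P_2$ is symmetric, with the roles of $i$ and $j$ and the default output swapped, as indicated in the Output step of $\bf\Pi^{QMP}_{Fair}$). The simulator $\mathcal{S}$ internally runs $\mathcal{A}$, plays the role of the dealer and of $P_2$ toward $\mathcal{A}$, and interacts with the trusted party in the ideal world at the moment it has to commit to $P_1$'s effective input and output decision.

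The adversary has only two essentially distinct deviations available: (i) \emph{forgery}, i.e.\ substituting or swapping qubits before sending to $P_2$, and (ii) \emph{early abort} in some round $l$. I would handle forgery first. Since $\mathcal{S}$ itself prepares the Bell pairs in the simulation, it holds both halves of every entangled state and can, upon receiving a qubit from $\mathcal{A}$, perform exactly the Bell-basis measurement that $P_2$ would perform in the hybrid execution. By the calculation preceding Theorem~1, any substituted qubit yields each of the four Bell outcomes with probability $\tfrac{1}{4}$, independently of $\mathcal{A}$'s internal state, so $\mathcal{S}$ simulates the abort/forgery-report event with exactly the same probability ($\tfrac{3}{4}$ in a round $l\neq j$, $\tfrac{1}{2}$ in round $l=j$) as occurs in the hybrid world; on detection, $\mathcal{S}$ instructs the trusted party to deliver the appropriate default output to the honest party (the same default the honest party would have output in the hybrid world under abort). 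Since the simulated measurement statistics coincide exactly with the hybrid ones, the induced distribution of $(VIEW,OUT)$ on the forgery branch is identical.

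For the early-abort branch, $\mathcal{S}$ observes the round $l$ in which $\mathcal{A}$ first refuses to send its qubit; it then submits $x_l$ (the input corresponding to that round, or whatever effective input is dictated by $\mathcal{A}$'s transcript) to the trusted party, obtains $P_1$'s output, and instructs the trusted party to deliver to $P_2$. The already-presented case analysis (Cases 1 and 2, Subcases (a)--(c)) precisely verifies that in every possible ordering of $l$, $i$, and $j$, the pair $(VIEW_{hybrid}, OUT_{hybrid})$ equals $(VIEW_{ideal}, OUT_{ideal})$ with probability $1$; this is because $f_2(x_l,y_j)$ collapses to the constant value ($0$ when $l\le j$, $1$ when $l>j$) that the hybrid protocol assigns to $P_2$ on abort. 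Combining the two branches, the overall distributions $IDEAL_{\mathcal{F},\mathcal{S}(z)}(x,y)$ and $HYBRID_{\bf\Pi,\mathcal{A}(z)}(x,y)$ coincide, which is exactly the fairness requirement of the hybrid-model definition.

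The step I expect to be the real subtlety is the forgery branch: one must verify that the simulator can perfectly emulate $P_2$'s Bell measurement outcome \emph{without} access to $\mathcal{A}$'s private quantum register. This works here because the halves of $\ket{g_0},\ket{g_1},\ket{g_2}$ given to $P_1$ are maximally mixed when traced over $P_2$, so any quantum state $\mathcal{A}$ transmits in place of its assigned qubit is uncorrelated with the half retained by $P_2$ (held by $\mathcal{S}$ in the simulation); consequently the Bell-basis statistics depend only on $\mathcal{A}$'s sent qubit together with $\mathcal{S}$'s stored half, both of which $\mathcal{S}$ possesses. Once this observation is made explicit, the simulator's output is a deterministic function of the transcript and the remainder of the argument is bookkeeping on the case split already laid out.
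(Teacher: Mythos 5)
Your early-abort branch is essentially the paper's own proof: the paper establishes this theorem \emph{only} by the case analysis over the orderings of $l$, $i$, $j$ (Cases 1--2, Subcases (a)--(c)), together with the observation that $P_2$'s early abort is harmless because $P_2$ moves first in each round. That part of your argument is correct and matches the paper.

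The gap is in your forgery branch. You assert that ``the induced distribution of $(VIEW,OUT)$ on the forgery branch is identical,'' but you only analyze the \emph{detection} event. Consider an undetected forgery in round $l=j$: with probability $\tfrac{1}{2}$ the honest $P_2$ measures $\ket{g_0}$ or $\ket{g_1}$ (each with probability $\tfrac{1}{4}$) and outputs a uniformly random bit as its value of $f_2(x_i,y_j)$, which is wrong with probability $\tfrac{1}{2}$ and, crucially, is distributed independently of the true function value. No ideal-world simulator can reproduce this: the trusted party always delivers $f_2(x',y_j)$ for some effective input $x'$ chosen without knowledge of $y_j$, so the honest party's ideal-world output is always a genuine function evaluation, not an unbiased coin. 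Hence the joint distribution of $(VIEW,OUT)$ under undetected forgery cannot in general be matched, and your claimed simulator does not exist for that branch. (A related unaddressed point: on \emph{detected} forgery in a round $l<j$, you must also specify and match what $P_2$ outputs upon reporting forgery versus what the trusted party delivers.) The paper sidesteps exactly this difficulty by keeping forgery out of the fairness theorem altogether --- it proves only a detection-probability statement ($\tfrac{3j-1}{4m}$, Theorems 1--2), which is a deterrence claim rather than a simulation claim --- and proves fairness solely against early abort. If you want to retain the simulation framing, you should either restrict the adversary class in this theorem to aborting (fail-stop-like) deviations, as the paper implicitly does, or prove a separate lemma showing the undetected-forgery event can be absorbed into the ideal model, which the present protocol does not support.
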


 \section{Quantum Solution of Millionaires' Problem in Rational Setting} 
As discussed in Section~\ref{fair}, the definition of fairness changes in rational setting. Thus, we have to modify our protocol in Section~\ref{nonrational} for rational setting. 

\restylealgo{boxed}
\begin{algorithm}[htbp]
{\scriptsize
\textbf {Inputs:}\\
 The inputs of the $QRShareGen$ are $x_ i$ from $P_1$ and $y_j$ from $P_2$. If one of the received inputs is not in the correct domain, then both the parties are given $\perp$.\\
\textbf{Computation:}\\
Dealer does the following:
\begin{enumerate}
\item Chooses $r$ according to a geometric distribution $\mathcal{G}(\gamma)$ with parameter $\gamma$ and sets it as the revelation round, i.e., the round in which the value of $f(x_i,y_j) = (0,0)$ or $(1,1)$.
\item Chooses $d$ according to the geometrical distribution $\mathcal{G}(\gamma)$ and sets the total number of iterations as $m = r + d$.
\item For the revelation round, i.e., when $l = r$, dealer does the following:\\
\hspace*{2mm} (a) If $f(x_i, y_j) = (0,0)$, prepares two copies of $\ket{g_0} = \frac{1}{\sqrt{2}}(\ket {0}_1\ket{0}_2 + \ket {1}_1\ket{1}_2)$.  \\
\hspace*{2mm} (b) If $f(x_i, y_j) = (1,1)$, prepares two copies of $\ket{g_1} = \frac{1}{\sqrt{2}}(\ket {0}_1\ket{0}_2 - \ket {1}\ket{1}_2)$. 
\item For each $l \in \{1, \dots, m\}, l \neq r$, prepares two copies of  $\ket{g_2} = \frac{1}{\sqrt{2}}(\ket {0}_1\ket{1}_2 + \ket {1}_1\ket{0}_2)$.
\end{enumerate}
\textbf{Output:}\\
\begin{enumerate}
\item For  $l \in \{1, 2, \dots, m\}$ dealer 
prepares a list $list_w$ of shares for each party $P_w$, where $w \in \{1,2\}$ such that for each round each player is given two qubits, marked as $1$st and $2$nd, from two different entangled states.\\
\hspace*{2mm}(a) when $l=r$, $P_1$ is given $1$st halves from two copies of $\ket{g_0}$ or $\ket{g_1}$ depending on the value of $f(x_i,y_j)$ and $P_2$  is given the second halves from the same entangled states.\\ 
\hspace*{2mm}(b) for all other rounds, $P_1$ is given first halves from two different  $\ket{g_2}$ states, whereas $P_2$ is given the remaining halves from the same entangled states.\\
\hspace*{2mm}(c) The marking of the qubits for a round for each party is such that the $1$st (resp. $2$nd) qubit of $P_1$ is correlated with the $1$st (resp. $2$nd) qubit of $P_2$.\\
\hspace*{2mm}(d) each list contains $2m$ number of qubits.\\
\end{enumerate}
}
\caption{$QRShareGen$}
\label{algo_rqsharegen1}
\end{algorithm}
\restylealgo{boxed}
\begin{algorithm}[htbp]
{\scriptsize
\textbf{Inputs:}\\
 Each of $P_1$ and $P_2$ receives his corresponding list of shares.\\
\textbf{Computation:}\\
The players do the following.\\
\begin{enumerate}
\item Each round is subdivided into two sub-rounds.\\
\item In first sub-round, $P_2$ sends the first qubit of its list for that round to $P_1$. \\
\item In second sub-round, $P_1$ sends the second qubit of its list for that round to $P_2$. \\
\item After receiving the qubits from $P_2$, $P_1$ measures the two qubits in Bell basis.\\
\hspace*{2mm}(a) If in any round $l$ the measurement result is $\ket{g_3}$, $P_1$ aborts the protocol and reports forgery by $P_2$.\\
\hspace*{2mm}(b) Otherwise, if the measurement result is $\ket{g_0}$, concludes $f_1(x_i,y_j) = 0$. If it is $\ket{g_1}$, concludes $f_1(x_i,y_j) = 1$. If it is $\ket{g_2}$, concludes $\perp$. \\
\item  After receiving the qubits from $P_1$, $P_2$ measures the two qubits in Bell basis.\\
\hspace*{2mm}(a) If in any round $l$ the measurement result is $\ket{g_3}$, $P_2$ aborts the protocol and reports forgery by $P_1$.\\
\hspace*{2mm}(b) Otherwise, if the measurement result is $\ket{g_0}$, concludes $f_2(x_i,y_j) = 0$. If it is $\ket{g_1}$, concludes $f_2(x_i,y_j) = 1$. If it is $\ket{g_2}$, concludes $\perp$. \\
\end{enumerate}
\ \\
\textbf{Output:}\\
\begin{enumerate}
\item $P_1$ and $P_2$ obtain their outputs in iteration $r$.\\
\item If $P_2$ aborts in round $l$, i.e., does not send its share at that round and $l \leq r$, $P_1$ outputs $\perp$. If $l > r$, $P_1$ has already determined the output in iteration $r$. Thus it outputs that value. \\
\item  If $P_1$ aborts in round $l$, i.e., does not send its share at that round and $l \leq r$, $P_2$ outputs $\perp$. If $l > r$, $P_2$ has already determined the output in iteration $r$. Thus it outputs that value. \\
\end{enumerate}
}
\caption{$\bf\Pi^{QRMP}_{Fair}$}
\label{algo_rfairqcomp1}
\end{algorithm}

Our proposed protocol is described in Algorithm~\ref{algo_rqsharegen1} ($QRShareGen$) and Algorithm~\ref{algo_rfairqcomp1} ($\bf {\Pi^{QRMP}_{Fair}}$).
Here, some additional assumptions are required. For example, unlike the non-rational setting, both the players obtain the value of the function in a specific round called revelation round. We denote this by $r$. The position of $r$ in $m$ number of iteration is not revealed to the players and is chosen according to a geometric distribution $\mathcal{G}(\gamma)$, where the parameter $\gamma$ in turn depends on the utility values of the players. We here assume that $\gamma < \frac{U_w^{TT}-U_w^{NN}}{U_w^{TN}-U_w^{NN}}$.
Another assumption is that if any player chooses abort in any round $l$, we tell him whether this round is the revelation round or not~\cite{GroceK}. The term and condition of the game is that knowing whether the round is the revelation round or not, no player can revise his decision. Now we show that under this restriction and an assumption that $\gamma < \frac{U_w^{TT}-U_w^{NN}}{U_w^{TN}-U_w^{NN}}$, our protocol achieves fairness. 

\subsection{Security Analysis}
A Byzantine player can manipulate the share as well as can abort early. Firstly, we analyze the security issues assuming that the player manipulates the share. Secondly, we analyze fairness of the protocol considering early abort of the corrupted player.

\subsubsection{\bf{Security against Forgery}} 
Without loss of generality, let us assume that $P_1$ is corrupted by the Byzantine adversary and can manipulate the share (here, qubit). He can send an arbitrary qubit to $P_2$ or can swap the qubits of his list and can send an uncorrelated qubit to $P_2$. The analysis is almost same as Subsection~\ref{Byz_non}. The forgery is detected with significant probability.

If no cheating occurs, then in round $l \neq r$, $P_2$ will get $\ket{g_2} = \frac{1}{\sqrt{2}}\Big[\ket{0}_1\ket{1}_2 + \ket{1}_1\ket{0}_2\Big].$ Now, let us assume that $P_1$ sends an arbitrary share $\ket{\phi}=[\alpha\ket{0}_3+\beta\ket{1}_3]$ instead of the correct one. Thus, at round $l \neq r$, the state at the end of $P_2$ would be
\begin{eqnarray*}
 \rho_{2} & = & [tr_{P_1}(\rho)](\rho_{3}) \\
  & = & \frac{1}{{2}}\Big[\ket{1}_2\bra{1}\Big(|\alpha|^2\ket{0}_3\bra{0}
+ \alpha^*\beta\ket{1}_3\bra{0} \\
& & +\alpha\beta^*\ket{0}_3\bra{1}
+|\beta|^2\ket{1}_3\bra{1}\Big)\\
& & + \ket{0}_2\bra{0}\Big(|\alpha|^2\ket{0}_3\bra{0}
 + \alpha^*\beta\ket{1}_3\bra{0}\\
& & +\alpha\beta^*\ket{0}_3\bra{1}
+|\beta|^2\ket{1}_3\bra{1}\Big)\Big],
\end{eqnarray*}
where $\rho=\frac{1}{2}(\ket{0}_1\ket{1}_2 + \ket{1}_1\ket{0}_2)(\bra{0}_1\bra{1}_2 + \bra{1}_1\bra{0}_2)$ and $\rho_3=[|\alpha|^2\ket{0}_3\bra{0} + \alpha^*\beta\ket{1}_3\bra{0}+\alpha\beta^*\ket{0}_3\bra{1}+|\beta|^2\ket{1}_3\bra{1}]$.

$P_2$ will measure qubit $2$ and  qubit $3$. After measurement, $P_2$ will get either $\ket{g_0}$ or $\ket{g_1}$ or $\ket{g_2}$ or $\ket{g_3}$ with probability $\frac{1}{4}$ instead of $\ket{g_2}$ only (see Section~\ref{Byz_non}). As $P_2$ has no idea about the position of the revelation round, when he gets $\ket{g_0}$ or $\ket{g_1}$, he conclude that this is the revelation round. When he gets $\ket{g_2}$, he concludes that $l\neq r$. Only if he gets $\ket{g_3}$, he immediately concludes that $P_1$ is cheating. The success probability of detecting such cheating for a round $l \neq r$ is $\frac{1}{4}$. 

Similarly, when $l=r$, if $P_1$ does not cheat, $P_2$ would get either $\ket{g_0}$ or $\ket{g_1}$ depending on the value of $f_2(x_i,y_j)$. However, if $P_1$ cheats, when measured, $P_2$ will get any one Bell state. In case of $\ket{g_0}$ and $\ket{g_1}$, he can not detect the cheating because he does not know the value of $f_2(x_i,y_j)$ a priori. Again, if he gets $\ket{g_2}$ he also does not detect the cheating, as in this case, he concludes that $l \neq r$. $P_2$ can immediately detect the cheating with certainty if and only if he gets $\ket{g_3}$. Thus, the success probability of detecting the cheating when $l=r$ is also $\frac{1}{4}$. As, $P_1$ have no idea about the position of $r$, the average success probability of $P_2$ to detect such cheating for a round $l$ is 
\begin{eqnarray*}
\frac{1}{4}\Pr(l \neq r)  +  \frac{1}{4}\Pr(l=r)
 =  \frac{1}{4}[ \Pr(l<r) + \Pr(l>r)]\\
 + \frac{1}{4}\Pr(l=r).
\end{eqnarray*}
We do not bother about $ \Pr(l>r)$ because, $P_2$ should have no incentive to detect the cheating when $l>r$, as he has already got $f_2(x_i,y_j)$ in round $r$. Thus the total success probability of detecting the cheating is
\begin{equation*}
\frac{1}{4}\Pr(l<r) +  \frac{1}{4}\Pr(l=r)
 = \frac{1}{4}(1-\gamma) + \frac{1}{4}\gamma
 = \frac{1}{4}
\end{equation*}

According to our protocol, if $P_2$ detects cheating, he will abort the protocol. Thus, if $\frac{1}{4} < U_1^{TT}$, $P_1$ has no incentive to forge in any round.
Same thing happens if we assume that $P_2$ is corrupted by the adversary.
\begin{theorem}
In rational setting, the success probability of an honest player to detect cheating in an arbitrary round $l$ by a player who is corrupted by a Byzantine adversary is $\frac{1}{4}.$
\end{theorem}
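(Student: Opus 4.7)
The plan is to mirror the argument used for the non-rational protocol in Subsection~\ref{Byz_non}, while accounting for the fact that in the rational setting $P_2$ knows neither the revelation round $r$ nor the bit $f_2(x_i,y_j)$ in advance, so the set of Bell outcomes that can be flagged as cheating shrinks from three to one. Without loss of generality I would take the cheating strategy of Byzantine $P_1$ to be forwarding an arbitrary one-qubit state $\ket{\phi}=\alpha\ket{0}_3+\beta\ket{1}_3$ in place of his second qubit; any within-round or cross-round swap, or entangling with a private ancilla followed by tracing out, produces a state that is uncorrelated with $P_2$'s retained qubit and is therefore subsumed by this case.

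Next I would compute the joint state available to $P_2$ just before the Bell-basis measurement. The key observation is that each of $\ket{g_0},\ket{g_1},\ket{g_2}$ reduces to $I/2$ on either half, so $\mathrm{Tr}_{P_1}(\rho)=I/2$ regardless of whether $l=r$ or $l\neq r$. Consequently
\[
\rho_{2}\otimes\rho_{3} \;=\; \tfrac{1}{2}I\otimes\ket{\phi}\bra{\phi},
\]
and the direct calculation already displayed just before the theorem statement gives $\Pr[\ket{g_k}]=\tfrac{1}{4}$ for each $k\in\{0,1,2,3\}$, independently of $\alpha,\beta$.

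It then remains to identify which outcomes $P_2$ can legitimately flag. In the rational protocol the Bell outcomes $\ket{g_0},\ket{g_1}$ are interpreted as the two possible revelations, and $\ket{g_2}$ is interpreted as $\perp$, so only $\ket{g_3}$ — which is never produced in an honest execution — constitutes an unambiguous detection signature. Hence in both the case $l\neq r$ (where honest play would yield $\ket{g_2}$) and the case $l=r$ (where honest play would yield $\ket{g_0}$ or $\ket{g_1}$), the probability of detection is exactly $\tfrac{1}{4}$. Averaging over the adversary's ignorance of $r$,
\[
\Pr[\text{detect}] \;=\; \tfrac{1}{4}\Pr(l\neq r) + \tfrac{1}{4}\Pr(l=r) \;=\; \tfrac{1}{4},
\]
which is the claim; the symmetric case in which $P_2$ is corrupted proceeds identically.

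The main obstacle I anticipate is justifying that arbitrary Byzantine strategies collapse to the single-qubit substitution analysis above. The safeguard is that, because every Bell pair prepared by the dealer has maximally mixed marginals, the reduced state on $P_2$'s retained qubit is $I/2$ no matter what $P_1$ privately does with his own register; by linearity of the Bell measurement, any global cheating strategy yields a state of the form $\tfrac{1}{2}I\otimes\rho_3$ for some $\rho_3$, and a calculation identical to the one above again gives $\tfrac{1}{4}$ for each Bell outcome, preserving the bound.
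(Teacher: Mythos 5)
Your core argument is the same as the paper's: substitute an arbitrary uncorrelated qubit $\ket{\phi}$, observe that the resulting state $\tfrac{1}{2}I\otimes\ket{\phi}\bra{\phi}$ yields each Bell outcome with probability $\tfrac{1}{4}$, note that in the rational protocol only $\ket{g_3}$ is an unambiguous cheating signature (since $\ket{g_0},\ket{g_1}$ are read as the revelation and $\ket{g_2}$ as $\perp$), and average $\tfrac{1}{4}\Pr(l\neq r)+\tfrac{1}{4}\Pr(l=r)=\tfrac{1}{4}$. This matches the paper's proof, which considers exactly this substitution/swap attack.

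One caveat on your closing paragraph: the claim that \emph{every} Byzantine strategy collapses to a state of the form $\tfrac{1}{2}I\otimes\rho_3$ is false. If $P_1$ acts on his \emph{genuine} half of the dealer's entangled pair rather than discarding it --- e.g.\ applies $X$ to his qubit of $\ket{g_2}$, converting the joint state to $\ket{g_0}$ --- the state delivered to $P_2$ remains correlated with $P_2$'s retained qubit, is not maximally mixed over the Bell basis, and in this example is detected with probability $0$ while fooling $P_2$ into reading a false revelation. The paper does not cover this class of attacks either, so your proof of the theorem as the paper intends it stands, but the reduction you offer as a safeguard does not hold and should be dropped or restricted to strategies in which the forwarded qubit is decoupled from $P_2$'s half.
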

\begin{theorem}
\label{forgery}
In rational setting, if $\frac{1}{4} < U_w^{TT}$, where $w\in\{1,2\}$, no player has any incentive to forge in a round $l$.
\end{theorem}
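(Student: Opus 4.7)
The plan is to leverage the per-round detection probability of $\frac{1}{4}$ established in the preceding theorem and couple it with a bound on what an undetected forgery can possibly yield. Without loss of generality let $P_1$ be corrupted by a Byzantine adversary and suppose $P_1$ substitutes an arbitrary qubit $\ket{\phi}=\alpha\ket{0}+\beta\ket{1}$ for the prescribed share in some round $l$. The preceding theorem already gives that $P_2$ aborts with probability exactly $\frac{1}{4}$ irrespective of whether $l=r$ or $l\neq r$, since only the measurement outcome $\ket{g_3}$ is unambiguously interpreted as cheating.

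I would then bound $P_1$'s expected utility by splitting into the detection and non-detection branches. If $P_2$ detects, the protocol halts and both players output $\perp$, giving $P_1$ utility $U_1^{NN}$. For the non-detection branch, observe that each round of $\bf\Pi^{QRMP}_{Fair}$ is sequential ($P_2$ sends first and $P_1$ sends second), so in the most favourable case for the cheater, namely $l=r$, $P_1$ has already measured and recovered $f(x_i,y_j)$ before releasing the fake qubit; the best $P_1$ can hope for is to additionally mislead $P_2$, which yields the ideal asymmetric payoff $U_1^{TN}$. For $l\neq r$, forgery cannot augment $P_1$'s knowledge of $f$ and is weakly dominated by the $l=r$ case.

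Combining the two branches, the expected utility of any forgery strategy is upper bounded by
\[
E[U_1^{\text{forge}}] \;\leq\; \tfrac{1}{4}\,U_1^{NN} + \tfrac{3}{4}\,U_1^{TN}.
\]
Forgery is unprofitable whenever the honest-play payoff $U_1^{TT}$ exceeds this bound. Under the implicit normalization $U_w^{NN}=0$ and $U_w^{TN}=1$ needed to make the pure number $\frac{1}{4}$ directly comparable to a utility, this inequality reduces exactly to $U_w^{TT}>\frac{1}{4}$, giving the claimed threshold. The case of $P_2$ forging follows identically by symmetry of the two roles in $\bf\Pi^{QRMP}_{Fair}$.

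The main obstacle I anticipate is not in the quantum analysis, which is essentially quoted from the preceding theorem, but in the clean bookkeeping of the utility comparison: a fully rigorous version should either rephrase the hypothesis in the normalization-free form $U_w^{TT}-U_w^{NN} > \frac{3}{4}(U_w^{TN}-U_w^{NN})$ or fix a normalization explicitly. A secondary concern is to verify that no richer deviation, such as swapping qubits across rounds or picking $\ket{\phi}$ adaptively based on earlier measurement outcomes, beats the worst-case bound above; this follows because any such strategy can be folded into the choice of a single-qubit forgery at the first deviation round, so the $\frac{1}{4}$ detection probability of the preceding theorem still governs the analysis.
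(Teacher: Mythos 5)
Your proposal is more ambitious than what the paper actually does: the paper's entire argument is the one-line observation that the detection probability is $\frac{1}{4}$, that detection causes $P_2$ to abort, and therefore "if $\frac{1}{4} < U_1^{TT}$, $P_1$ has no incentive to forge" --- a bare comparison of a probability with a utility, with no expected-utility decomposition and no normalization. You are right to flag that this comparison only makes sense under some implicit normalization; that is a fair criticism of the theorem as stated. However, your attempt to supply the missing calculation does not close the gap: it contains an arithmetic error that prevents it from recovering the stated threshold. Under your own bound $E[U_1^{\text{forge}}] \leq \tfrac{1}{4}U_1^{NN} + \tfrac{3}{4}U_1^{TN}$ and your own normalization $U_w^{NN}=0$, $U_w^{TN}=1$, the right-hand side equals $\tfrac{3}{4}$, so the condition you derive is $U_w^{TT} > \tfrac{3}{4}$, not $U_w^{TT} > \tfrac{1}{4}$ as you claim. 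The inequality does not "reduce exactly" to the theorem's hypothesis, and no choice of the two normalization constants turns $\tfrac{1}{4}U^{NN}+\tfrac{3}{4}U^{TN}$ into the pure number $\tfrac{1}{4}$ while keeping $U^{TN}>U^{TT}>U^{NN}$ consistent with $\mathcal{R}_1$.

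A secondary problem is the payoff you assign to the detection branch. In $\bf\Pi^{QRMP}_{Fair}$ the rounds are sequential with $P_2$ sending first, so if $P_1$ forges at $l=r$ he has already measured and learned $f(x_i,y_j)$ before releasing the fake qubit; even when $P_2$ detects the forgery and aborts, $P_1$ retains the output and $P_2$ does not, so the detection branch yields $U_1^{TN}$ rather than $U_1^{NN}$ for that strategy. Your claimed inequality is therefore not an upper bound over all forgery strategies, and the "folding into a single-qubit forgery at the first deviation round" remark does not repair this, because the adaptive strategy of forging only after observing that $l=r$ genuinely dominates the non-adaptive one. To be fair, the paper's own one-sentence proof does not address either of these issues; but since your stated goal was a rigorous version, the proof as written does not establish the theorem, and an honest write-up would either adopt the paper's (dimensionally loose) direct comparison or state and prove a corrected threshold of the form $U_w^{TT}-U_w^{NN} > \tfrac{3}{4}\,(U_w^{TN}-U_w^{NN})$ together with an explicit treatment of the adaptive $l=r$ forgery.
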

Same conclusion can be drawn if we assume that $P_1$ swaps the quits of his list and sends an uncorrelated qubit to $P_2$. 

\subsubsection{\bf{Fairness against Early Abort}}
We have mentioned earlier that a player who is corrupted by a Byzantine adversary can abort early. As $P_1$ is always computing its output first followed by $P_2$, the aborting of $P_1$ plays an important role to achieve the fairness of the protocol. The early abort of $P_2$ will terminate the protocol up to that round in which $P_2$ aborts. In that case, either both get the correct outputs or none gets the correct outputs. Thus, early abort of $P_2$ does not affect the fairness condition. We now concentrate on the early abort of $P_1$.

Let us assume that $P_1$ aborts in round $l$. According to our protocol if $P_1$ declares early abort, we will tell whether the round is the revelation round or not. Knowing that $P_1$ can not revise his decision. If $l<r$, $P_1$ gets $\ket{g_2}$, whereas $P_2$ outputs $\perp$. That means in this case, the utility of $P_1$ is $U_1^{NN}$ (no one gets the output). If $P_1$ aborts in round $l=r$, $P_1$ gets either $\ket{g_0}$ or $\ket{g_1}$ depending on the value of $f_1(x_i,y_j)$ and $P_2$ outputs $\perp$. In this case, the utility of $P_1$ is $U_1^{TN}$ ($P_1$ gets the output and $P_2$ does not). $P_1$ should have no incentive to abort in round $l>r$, as in this case $P_1$ and $P_2$ both have already obtained the value of the function in iteration $r$. Thus, the expected utility of $P_1$ is
\begin{equation*}
U_1^{NN}\Pr(l<r) + U_1^{TN}\Pr(l=r)
 = U_1^{NN}(1-\gamma) + U_1^{TN}\gamma
\end{equation*}
According to our assumption that $\gamma < \frac{U_w^{TT}-U_w^{NN}}{U_w^{TN}-U_w^{NN}}$, we can write $U_1^{NN}(1-\gamma) + U_1^{TN}\gamma < U_1^{TT}$. Hence, $P_1$ should have no incentive to abort early in the protocol and the protocol achieves fairness.

\begin{theorem}
\label{fairness}
In rational setting, provided $\mathcal{R}_1$ (Section~\ref{pre}), $0<\gamma<1$ and $U_w^{TN}+(1-\gamma)U_w^{NN}<U_w^{TT}$ for all $w \in \{1,2\}$, the protocol $\bf\Pi^{QRMP}_{Fair}$ achieves fairness.
\end{theorem}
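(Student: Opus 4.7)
The plan is to invoke the definition of fairness in the rational setting from Section~\ref{fair} and show that for any party $P_w$ corrupted by a Byzantine adversary, every deviation prospect $\mathcal{O}_l$ satisfies $E[U_w(\mathcal{O}_l)] < U_w^{TT}$. Since the protocol $\bf\Pi^{QRMP}_{Fair}$ admits essentially two classes of deviation, namely qubit tampering (forgery) and early abort, it suffices to bound the expected utility from each class and take the worst one.

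For forgery, I would directly invoke Theorem~\ref{forgery}: any attempt by $P_w$ to send an arbitrary or swapped qubit in round $l$ is detected by the honest party with probability $\tfrac{1}{4}$, at which point the honest party aborts and $P_w$'s utility collapses to $U_w^{NN}$ or worse (since $P_w$ has not yet seen the revelation round with high probability). Under the assumption $\tfrac{1}{4} < U_w^{TT}$ already used to justify Theorem~\ref{forgery} and the preference order $\mathcal{R}_1$, a standard comparison of expected utilities shows that forgery is strictly dominated by honesty.

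For early abort, assume without loss of generality that $P_1$ aborts in round $l$; by the protocol's convention the dealer reveals only whether $l=r$ after the abort is declared, and $P_1$ cannot revise his choice afterwards. Enumerating cases: (i) if $l<r$, both players hold $\perp$, yielding outcome utility $U_1^{NN}$; (ii) if $l=r$, then $P_1$ extracts $f_1(x_i,y_j)$ from the Bell measurement while $P_2$ still outputs $\perp$, yielding $U_1^{TN}$; (iii) if $l>r$ there is no incentive to deviate since both parties have already reconstructed $f$ in round $r$. Since $r\sim\mathcal{G}(\gamma)$ and $P_1$ has no information about $r$ at the moment he chooses to abort, the relevant probabilities are $\Pr[l=r]=\gamma$ and $\Pr[l<r]=1-\gamma$, so the expected utility of the abort prospect is $\gamma\,U_1^{TN} + (1-\gamma)\,U_1^{NN}$.

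Combining this with the hypothesis $\gamma U_w^{TN} + (1-\gamma) U_w^{NN} < U_w^{TT}$ (which is the intended reading of the displayed condition, matching the bound $\gamma<\frac{U_w^{TT}-U_w^{NN}}{U_w^{TN}-U_w^{NN}}$ derived in the subsection) yields $E[U_w(\mathcal{O}_l)] < U_w^{TT}$ for every abort round $l$, and the analysis is symmetric for $P_2$. Together with the forgery bound, this gives the definition of fairness in the rational setting. The main obstacle I anticipate is the subtle conditioning on the geometric distribution: one must be careful that, because the dealer only discloses whether $l=r$ after the abort, the attacker's prior over $r$ is the unconditioned $\mathcal{G}(\gamma)$, and not a survival-conditioned distribution that would change the coefficients in the expected-utility expression.
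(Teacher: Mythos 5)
Your proof is correct and follows essentially the same route as the paper: the same three-case analysis of an abort at round $l$ (before, at, or after the revelation round $r$), the same expected utility $\gamma U_w^{TN}+(1-\gamma)U_w^{NN}$ obtained from the memoryless hazard rate of $\mathcal{G}(\gamma)$, and the same comparison against $U_w^{TT}$ via the hypothesis (which you correctly read as $\gamma U_w^{TN}+(1-\gamma)U_w^{NN}<U_w^{TT}$, fixing the missing $\gamma$ in the displayed condition). The only divergence is that you also fold in the forgery bound of Theorem~\ref{forgery}, which the paper keeps separate and invokes only later for strict Nash equilibrium, so that part of your argument silently imports the extra assumption $\tfrac{1}{4}<U_w^{TT}$ that is not among this theorem's stated hypotheses.
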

Now we are in a position to prove strict Nash equilibrium for our protocol $\bf\Pi^{QRMP}_{Fair}$. 
\begin{theorem}
In rational setting, provided $\frac{1}{4} < U_w^{TT}$, $\mathcal{R}_1$ (Section~\ref{pre}), $0<\gamma<1$ and $U_w^{TN}+(1-\gamma)U_w^{NN}<U_w^{TT}$ for all $w \in \{1,2\}$, the protocol $\bf\Pi^{QRMP}_{Fair}$ achieves strict Nash equilibrium.
\end{theorem}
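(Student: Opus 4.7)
The plan is to verify the strict Nash condition $u_w(\sigma'_w, \overrightarrow{\sigma}_{-w}) < u_w(\overrightarrow{\sigma}) = U_w^{TT}$ for every player $P_w$ and every deviating strategy $\sigma'_w$. By the symmetry of $\bf\Pi^{QRMP}_{Fair}$ I would fix $w=1$ without loss of generality and partition the deviations of $P_1$ into three classes: (a) pure early abort in some round $l$, (b) pure forgery in some round $l$ (either substituting an arbitrary qubit or swapping qubits within $P_1$'s own list), and (c) hybrid strategies that combine forgery in one or more rounds with possibly an abort at a different round. Honest execution reaches the revelation round $r$ and delivers $f(x_i,y_j)$ to both parties, yielding $U_1^{TT}$ with probability one, so the entire burden is to upper-bound the expected utility of each class of deviation strictly below $U_1^{TT}$.

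Classes (a) and (b) follow directly from the two preceding theorems and their proofs. For (a), the analysis supporting Theorem~\ref{fairness} gives $E[u_1(\sigma'_1, \sigma_2)] = (1-\gamma)U_1^{NN} + \gamma U_1^{TN}$, which is strictly less than $U_1^{TT}$ under the $\gamma$-hypothesis. For (b), Theorem~\ref{forgery} together with the $\tfrac{1}{4}$ detection probability computed in its proof shows that forgery in any round is caught by $P_2$ with probability $\tfrac{1}{4}$, at which point $P_2$ aborts and $P_1$ is left with at most $U_1^{NN}$; the undetected branch yields no more information about $r$ than honest play would, so combined with the hypothesis $\tfrac{1}{4} < U_1^{TT}$ the expected utility of pure forgery is strictly below $U_1^{TT}$ as well. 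Each bound is strict, which is essential for strict Nash.

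For class (c) the idea is to show that hybrid strategies cannot beat the better of the two pure bounds. The key observation is that before the revelation round every honestly-prepared pair is $\ket{g_2}$, so $P_1$'s passive Bell measurements deterministically return $\ket{g_2}$ and reveal nothing about the position of $r$; the only way to acquire extra information is via a forgery, which is detected with probability $\tfrac{1}{4}$ per attempted round and triggers $P_2$'s abort, collapsing any subsequent opportunistic abort to utility $U_1^{NN}$. Formally I would condition on the first round $l^{\ast}$ at which $P_1$ deviates, express $E[u_1]$ as a convex combination over $l^{\ast}$ and over whether $l^{\ast}<r$, $l^{\ast}=r$, or $l^{\ast}>r$, and in each cell invoke the class-(a) or class-(b) bound on the resulting sub-strategy. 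The main obstacle is the bookkeeping in the $l^{\ast}=r$ cell: I must check that a forgery landing on the revelation round cannot simultaneously let $P_1$ decode $f_1(x_i,y_j)$ from its own unforged half and induce a favourable misconclusion in $P_2$. This reduces to noting that $P_1$'s measurement uses the $\ket{g'_0}$ or $\ket{g'_1}$ halves which are independent of the qubit actually transmitted to $P_2$, so the forged qubit either triggers a $\ket{g_3}$ detection (yielding $U_1^{NN}$) or is uncorrelated with $f_2(x_i,y_j)$ and thus leaves $P_2$'s eventual output distribution unchanged from $P_1$'s perspective. Because every cell is strictly dominated by $U_1^{TT}$, so is their convex combination, yielding $u_1(\sigma'_1, \sigma_2) < U_1^{TT}$ for every $\sigma'_1 \neq \sigma_1$. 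The symmetric argument for $P_2$ closes the proof.
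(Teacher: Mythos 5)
Your proposal is correct and rests on the same two pillars as the paper's own proof: Theorem~\ref{forgery} (no incentive to forge when $\frac{1}{4} < U_w^{TT}$) and Theorem~\ref{fairness} (no incentive to abort early under the stated utility and $\gamma$ conditions), combined to conclude $u_w(\sigma'_w,\overrightarrow{\sigma}_{-w}) < u_w(\overrightarrow{\sigma})$. Where you go beyond the paper is in explicitly treating your class (c): the paper's proof simply cites the two preceding theorems and asserts the strict inequality for \emph{any} deviating strategy, implicitly assuming that ruling out pure forgery and pure abort suffices. Your conditioning on the first round of deviation, and your check that a forgery cannot leak the position of the revelation round (since pre-revelation rounds all carry $\ket{g_2}$ and $P_1$'s own halves are unaffected by what it transmits), closes a gap the paper leaves open; this is the more defensible version of the argument, since strict Nash requires dominating every deviation, not just the two canonical ones. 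Both your write-up and the paper's inherit the same weakness in the forgery step, namely the dimensionally odd comparison of the detection probability $\frac{1}{4}$ with the utility $U_w^{TT}$, but since that is the paper's own stated hypothesis you are entitled to use it as given.
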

\begin{proof}
In Theorem~\ref{forgery}, it has been shown that if $\frac{1}{4} < U_w^{TT}$, where $w\in\{1,2\}$, no player has any incentive to cheat. It will be better for him to follow the suggested strategy as by cheating he can not increase his payoff. Further in Theorem~\ref{fairness}, we proved that provided $\mathcal{R}_1$ (Section~\ref{pre}), $0<\gamma<1$ and $U_w^{TN}+(1-\gamma)U_w^{NN}<U_w^{TT}$ for all $w \in \{1,2\}$, no player has any incentive to abort early. In this case also, deviation from the suggested strategy does not help him to gain more payoff. 
In other word, we have $u_w (\sigma'_w,\overrightarrow{\sigma}_{-w} ) < u_w (\overrightarrow{\sigma})$ for any player $P_w$, $w \in \{1,2\}$ and hence
the player $P_w$ always follows the suggested strategy. \end{proof}
  
\section{Secure Two-Party Computation involving Embedded XOR}
In this section, we first describe the embedded XOR problem proposed by Gordon et al.~\cite{Gordon}. Let us denote two players by $P_1$ and $P_2$.
Player $P_1$ is given an ordered list $\{x_1,x_2,x_3\}$ and $P_2$ is given an ordered list $\{y_1,y_2\}$.  $P_1$ randomly chooses the input from the ordered list and sent to the dealer. $P_2$ also randomly chooses the input from his list and delivers to the dealer. Dealer calculates the function.
For convenience, we here recall the table for $f$ given in~\cite{Gordon}.
\begin{center}
  \begin{tabular}{| l  | c |  r|}\hline
    & $y_1$ & $y_2$ \\ 
    \hline
  $x_1$  &  0 & 1 \\ \hline
  $x_2$  & 1 & 0  \\ \hline
  $x_3$  &1 & 1  \\ \hline
  \end{tabular}
\end{center}
The function can be described as
\begin{equation}
\label{eqfw2}
 f_w(x, y) =
  \begin{cases}
   1 & \text{if } i \neq j; \\
   0 & \text{if } i = j.
  \end{cases}
\end{equation}
where,  $x$ and $y$ denote the inputs from $P_1$ and $P_2$ respectively and $w \in \{1,2\}$
The protocol proceeds in a series of $M$ iterations, where $M= \omega (\log \lambda)$, $\lambda$ is the security parameter.
The dealer chooses the revelation round $r$ according to geometric distribution with parameter $\gamma$.
The dealer then creates two sequences $\{a_l\}$ and $\{b_l\}$,
$l = 1, 2, \ldots, M$, as follows. 
\begin{eqnarray*}
\mbox{ For } l \geq r, & a_l = f_1(x,y) = b_l = f_2(x, y).\\
\mbox{ For } l < r, & a_l = f_1(x,\hat{y}), \indent b_l = f_2(\hat{x},y),
\end{eqnarray*}
where $\hat{x}$ (or $\hat{y}$) is a random value of $x$ (or $y$) chosen by the dealer.

Next, the dealer splits the secret $a_l$ into the shares $a^1_l$ and $a^2_l$,
and the secret $b_l$ into the shares $b^1_l$ and $b^2_l$, so that
$a_l = a^1_l \oplus a^2_l$ and 
$b_l = b^1_l \oplus b^2_l$, 
and gives
the shares $\{(a^1_l$, $b^1_l)\}$ to $P_1$ and
the shares $\{(a^2_l$, $b^2_l)\}$ to $P_2$.
In each round $l$, $P_2$ sends $a^2_l$ to $P_1$, who, in turn
sends $b^1_l$ to $P_2$. $P_1$ (res. $P_2$) learns the output value $f_1(x, y)$ (res. $f_2(x,y)$) in iteration $r$.  Here we assume that the dealer who will distribute the shares is honest and can compute the function described in Equation~\eqref{eqfw2}.

The algorithms in the Byzantine setting are the same as those in the fail-stop setting except some additional steps. In Byzantine setting, the shares are signed by the dealer. Exploiting MAC signature we can resist a player to send a false share.

\section{Quantum Protocol for Embedded XOR in Non-Rational Setting} 
We suitably modify the classical protocol by Gordon et al. to propose a quantum 
solution of the embedded XOR problem. As in the quantum protocol to solve the millionaires' problem, here also we exploit entangled states to obtain the security.

Now we describe the protocol. Let
$P_1$ is given an ordered list $\{x_1,x_2,x_3\}$ and $P_2$ is given an ordered list $\{y_1,y_2\}$. $P_1$ randomly chooses an input $x$ from his ordered list and sends to the dealer. Similarly, $P_2$ also chooses an input $y$ randomly from his ordered list and sends to the dealer. Dealer computes the function and  creates two sequences $\{a_l\}$ and $\{b_l\}$,
$l = 1, 2, \ldots, m$,  where $m$ is the total number of the round in such a way that

For $l \geq r$, $a_l = f_1(x,y) = b_l = f_2(x, y)$ and

For$l < r$, $a_l = f_1(x,\hat{y}), \indent b_l = f_2(\hat{x},y)$,

\noindent where $\hat{x}$ (or $\hat{y}$) is a random value of $x$ (or $y$) chosen by the dealer.
 In quantum domain, the two sequences $\{a_l\}$ and $\{b_l\}$ are distributed by exploiting the qubits of entangled states. The mechanism is described in Algorithm~\ref{algo_qesharegen} ($QEShareGen$)
and Algorithm~\ref{algo_efairqcomp} ($\bf {\Pi^{QEP}_{Fair}}$).

\subsection{Security Analysis} 
In this subsection we discuss the security issues against a Byzantine adversary. First, we analyze the sensitivity of our protocol to detect a cheating by a Byzantine player. Then we analyze the fairness issue when a player aborts early.

\subsubsection{\bf{Security against Forgery}}
Without loss of generality, we assume that $P_1$ is corrupted by the Byzantine adversary and tries to manipulate the qubits. According to our protocol, in any round $l \leq r$, if $P_1$ does not cheat, $P_2$ will measure either $\ket{g_0}$ or $\ket{g_1}$ depending on the value of $b_l$. However, when $P_1$ cheats, the case will be different. Let us assume that in round $l \leq r$, $P_1$ sends an arbitrary qubit $ \ket{\phi}=\alpha\ket{0}_3+\beta\ket{1}_3$ to $P_2$. {\em {Here, we assume that if $P_1$ would not cheat at the round $l$, $P_2$ would receive $\ket{g_0}$. Same thing happen if we assume that $P_2$ will receive $\ket{g_1}$.}}
Thus the final state at the end of $P_2$ would be 
\begin{eqnarray*}
 \rho_{2} & = & [tr_{P_1}(\rho)](\rho_{3})\\
  & = & \frac{1}{{2}}\Big[\ket{1}_2\bra{1}\Big(|\alpha|^2\ket{0}_3\bra{0}
+ \alpha^*\beta\ket{1}_3\bra{0}\\
& & +\alpha\beta^*\ket{0}_3\bra{1}+|\beta|^2\ket{1}_3\bra{1}\Big)\\
& & + \ket{0}_2\bra{0}
\Big(|\alpha|^2\ket{0}_3\bra{0} + \alpha^*\beta\ket{1}_3\bra{0}\\
& & +\alpha\beta^*\ket{0}_3\bra{1} +|\beta|^2\ket{1}_3\bra{1}\Big)\Big],
\end{eqnarray*}
where $\rho=\frac{1}{2}(\ket{0}_1\ket{1}_2 + \ket{1}_1\ket{0}_2)(\bra{0}_1\bra{1}_2 + \bra{1}_1\bra{0}_2)$ and $\rho_3=[|\alpha|^2\ket{0}_3\bra{0} + \alpha^*\beta\ket{1}_3\bra{0}+\alpha\beta^*\ket{0}_3\bra{1}+|\beta|^2\ket{1}_3\bra{1}]$.

$P_2$ will measure qubit $2$ and  qubit $3$. Thus, after measurement, $P_2$ will get either $\ket{g_0}$ or $\ket{g_1}$ or $\ket{g_2}$ or $\ket{g_3}$ with probability $\frac{1}{4}$ instead of $\ket{g_0}$ only (see Section~\ref{Byz_non}). As in round $l \leq r$, $P_2$ will measure either $\ket{g_0}$ or $\ket{g_1}$ without any cheating, when he gets $\ket{g_0}$ or $\ket{g_1}$, he can not detect cheating. If he gets $\ket{g_2}$ or $\ket{g_3}$, he immediately concludes that $P_1$ is cheating. Thus, the success probability of detecting such cheating for any round $l \leq r$ is $\frac{1}{2}$. After the revelation round, $P_2$ has no incentive to detect cheating as $P_2$ has already got the correct output. Thus, we can write the expected success probability of $P_2$ to detect cheating by $P_1$ is 
\begin{equation*}
\frac{1}{2}\Pr(l<r) +  \frac{1}{2}\Pr(l=r)
 = \frac{1}{2}(1-\gamma) + \frac{1}{2}\gamma
 = \frac{1}{2}.
\end{equation*}
The same situation arises when we assume that $P_2$ is cheating.
\begin{theorem}
In non-rational setting, in an arbitrary round $l \leq r$, the success probability of an honest player to detect cheating by a player who is corrupted by a Byzantine adversary is $\frac{1}{2}$.
\end{theorem}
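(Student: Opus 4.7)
The plan is to adapt the density-matrix argument of Subsection~\ref{Byz_non} to the measurement pattern dictated by the embedded-XOR protocol and then check that averaging over the geometric distribution of the revelation round $r$ preserves the per-round value. Without loss of generality I fix $P_1$ as the forger; the case of $P_2$ is symmetric.

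First I would identify the honest-outcome pattern: in any round $l \leq r$ of $\bf\Pi^{QEP}_{Fair}$, the share-generation algorithm prepares the pair that $P_2$ will ultimately Bell-measure so that the honest outcome is $\ket{g_0}$ or $\ket{g_1}$, depending on whether $b_l=0$ or $b_l=1$. Suppose instead $P_1$ substitutes an arbitrary qubit $\ket{\phi} = \alpha\ket{0}+\beta\ket{1}$ for the half he is supposed to transmit. Because $P_2$'s retained half of the honest pair is maximally mixed on its own register and is uncorrelated with anything in $P_1$'s hands, tracing out $P_1$'s side leaves the two-qubit state that $P_2$ measures in the product form $\tfrac{1}{2}I\otimes\rho_\phi$, which is exactly the $\rho_2$ already computed in Subsection~\ref{Byz_non}. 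So I can reuse that calculation without modification.

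Next I would project $\rho_2$ onto each Bell basis vector. Using $|\alpha|^2+|\beta|^2=1$, a short calculation gives $\bra{g_k}\rho_2\ket{g_k} = \tfrac{1}{4}$ for every $k\in\{0,1,2,3\}$, so the four Bell outcomes are uniformly distributed regardless of the forger's choice of $(\alpha,\beta)$. The crucial step is the detection criterion: since both $\ket{g_0}$ and $\ket{g_1}$ are a priori consistent with honest behaviour (and $P_2$ does not know $b_l$), only the outcomes $\ket{g_2}$ and $\ket{g_3}$ can certify cheating, giving a per-round detection probability of $\tfrac{1}{4}+\tfrac{1}{4}=\tfrac{1}{2}$.

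Finally I would average over the position of $r$ to match the expectation computation stated just before the theorem. $P_2$ has no motive to detect cheating when $l > r$, since his output is already fixed, so only $l<r$ and $l=r$ contribute, with geometric weights $1-\gamma$ and $\gamma$; the expected probability is $\tfrac{1}{2}(1-\gamma)+\tfrac{1}{2}\gamma = \tfrac{1}{2}$. The main subtlety, and the step I expect to be the hardest to pin down rigorously, is showing that more general Byzantine strategies (swapping qubits within $P_1$'s own list, or entangling the outgoing qubit with a private ancilla and only sending one half) cannot bias the Bell outcome away from uniform; this follows because in every such case the qubit reaching $P_2$ has some reduced state $\rho_\phi$ that is independent of $P_2$'s retained half, so the product structure $\tfrac{1}{2}I\otimes\rho_\phi$ still holds and each Bell projection still returns $\tfrac{1}{4}$. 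The symmetric argument with $P_1$ as the honest party completes the proof.
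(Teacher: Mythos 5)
Your proposal is correct and follows essentially the same route as the paper: an arbitrary substituted qubit leaves $P_2$ with the product state $\tfrac{1}{2}I\otimes\rho_\phi$, so all four Bell outcomes occur with probability $\tfrac{1}{4}$, and since only $\ket{g_2}$ and $\ket{g_3}$ are inconsistent with the honest pattern for $l\leq r$, detection succeeds with probability $\tfrac{1}{2}$, unchanged after averaging over the geometric distribution of $r$. Your closing remark on more general attacks (ancilla-entangled or swapped qubits) is a welcome tightening of a point the paper only asserts in passing, but it does not change the argument.
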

The swapping of the qubits in a round i.e interchanging the position of the $1$st and $2$nd qubits can be analyzed in the same manner. 

\subsubsection{\bf{Fairness against Early Abort}} 
In this subsection we will show how the fairness condition is maintained when a player corrupted by a Byzantine adversary aborts the protocol prematurely.
Let us assume that $P_1$ aborts in round $l$. As $P_1$ is always computing its output first followed by $P_2$, the aborting of $P_1$ plays an important role to achieve the fairness of the protocol. The early abort of $P_2$ will terminate the protocol up to that round in which $P_2$ aborts. In that case, either both get the correct value or none gets the correct value. Thus, early abort of $P_2$ does not affect the fairness condition. 

According to our protocol, if $P_1$ aborts in a round $l<r$, $P_2$ outputs $b_{l-1}=f_2(\hat{x},y)$. In this case $P_1$ outputs $a_l$ which is equal to $f_1(x,\hat{y})$. In {\em ideal model}, for $l<r$ the trusted party sends $f_1(x,\hat{y})$ to $P_1$ and $f_2(\hat{x},y)$ to $P_2$. Thus, if $P_1$ aborts in round $l<r$, $P_1$ gets $f_1(x,\hat{y})$ and the trusted party sends $f_2(\hat{x},y)$ to $P_2$. Thus  for $l<r$, we get

{\scriptsize
 $\Pr\Big[\Big(VIEW_{ideal}(x,y), OUT_{ideal}(x,y)\Big)
= (f(x,\hat{y}), f(\hat{x},y))|l<r\Big]$\\

\noindent $= \Pr\Big[\Big(VIEW_{hybrid}(x,y), OUT_{hybrid}(x,y)\Big) 
= (f(x,\hat{y}), f(\hat{x},y))|l<r\Big]$.}

When $l=r$, $P_1$ has already got the correct output whereas $P_2$ outputs $b_{r-1}=f_2(\hat{x}, y)$. In {\em ideal model}, when $l=r$, the trusted party sends $f_1(x,y)$ (res. $f_2(x,y)$) to $P_1$ (res. $P_2$). The following analysis shows how fairness is maintained in this case.

Here, we first recall the table for embedded XOR. We get that in case of $P_1$, $\Pr[f_1(x_1,\hat{y})=0]=\Pr[f_1(x_2,\hat{y})=0]=\Pr[y \in \{y_1,y_2\}]=\frac{1}{2}$ and $\Pr[f_1(x_1,\hat{y})=1]=\Pr[f_1(x_2,\hat{y})=1]=\Pr[y \in \{y_1,y_2\}]=\frac{1}{2}$ whereas $\Pr[f_1(x_3,\hat{y})=0]=0$ and $\Pr[f_1(x_3,\hat{y})=1]=1$. In case of $P_2$, $\Pr[f_2(\hat{x},y)=0]=\Pr[x=x_1]=\frac{1}{3}$ and $\Pr[f_2(\hat{x},y)=1]=\Pr[x \in \{x_2,x_3\}]=\frac{2}{3}$. Thus, we can write the followings.
{\scriptsize
\begin{eqnarray*}
\Pr\Big[\Big(VIEW_{ideal}(x,y), OUT_{ideal}(x,y)\Big)
 = (0, 0)\Big|l = r\Big] & = & \frac{1}{3}. \frac{1}{3},\\
\Pr\Big[\Big(VIEW_{ideal}(x,y), OUT_{ideal}(x,y)\Big)
 = (0, 1)\Big|l = r \Big] & = & \frac{1}{3}. \frac{2}{3},\\
 \Pr\Big[\Big(VIEW_{ideal}(x,y), OUT_{ideal}(x,y)\Big)
 = (1, 0)\Big|l = r \Big] & = & \frac{2}{3}. \frac{1}{3},\\
 \Pr\Big[\Big(VIEW_{ideal}(x,y), OUT_{ideal}(x,y)\Big) 
= (1, 1)\Big|l = r  \Big] & = & \frac{2}{3}. \frac{2}{3}.
\end{eqnarray*}}
Similarly, in hybrid world,
{\scriptsize
\begin{eqnarray*}
 \Pr\Big[\Big(VIEW_{hybrid}(x,y), OUT_{hybrid}(x,y)\Big)
 = (0, 0)\Big|l = r \Big] & =  & \frac{1}{3}. \frac{1}{3},\\
 \Pr\Big[\Big(VIEW_{hybrid}(x,y), OUT_{hybrid}(x,y)\Big)
 = (0, 1)\Big|l = r \Big] & = & \frac{1}{3}. \frac{2}{3},\\
 \Pr\Big[\Big(VIEW_{hybrid}(x,y), OUT_{hybrid}(x,y)\Big)
= (1, 0)\Big|l = r \Big] & = & \frac{2}{3}. \frac{1}{3},\\
 \Pr\Big[\Big(VIEW_{hybrid}(x,y), OUT_{hybrid}(x,y)\Big)
 = (1, 1)\Big|l = r \Big] & = & \frac{2}{3}. \frac{2}{3}.
\end{eqnarray*}}
Above probability calculations show that when $l=r$ the adversary does not do more harm in hybrid world than that he can do in the ideal world. Thus, our protocol achieves fairness.

Fairness is obvious if we consider the abort of $P_1$ at a round $l>r$, as in this situation, both in ideal world and in hybrid world, $P_1$ as well as $P_2$ obtain the correct output in iteration $r$.
\begin{theorem}
In non-rational setting, in an arbitrary round $l$, the protocol $\bf\Pi^{QEP}_{Fair}$ achieves  fairness considering early abort of a corrupted player.
\end{theorem}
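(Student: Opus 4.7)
The plan is to prove fairness by case analysis on the round $l$ in which the corrupted party aborts, comparing the joint distribution of $(VIEW, OUT)$ in the hybrid world to that produced by an ideal-world simulator $\mathcal{S}$. I would first reduce to the case where $P_1$ is the aborting party, since (as already observed in the excerpt) $P_1$ speaks last within each round; hence an abort by $P_2$ only truncates the round, leaving both parties in the same informational state (both correct or both uninformed), which trivially matches the ideal world.

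For $P_1$ aborting at round $l$, I would split into three subcases, mirroring the structure of the preceding forgery analysis. In the subcase $l < r$, the protocol design guarantees $P_1$ receives $a_l = f_1(x,\hat{y})$ and $P_2$'s last recovered value is $b_{l-1} = f_2(\hat{x},y)$, where $\hat{x}, \hat{y}$ are the dealer's uniformly chosen dummies. The ideal-world simulator, upon seeing $P_1$ abort before the revelation round, instructs the trusted party to deliver exactly these independent dummy evaluations, so the two distributions are identical by construction. In the subcase $l > r$, both parties have already obtained $f(x,y)$ in round $r$, and the simulator mimics this by completing the ideal protocol before the abort; equality of distributions is immediate.

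The main obstacle — and the step that deserves the most care — is the subcase $l = r$. Here $P_1$ learns the true output $f_1(x,y)$ (since the revelation qubits have already been measured in round $r$) but $P_2$ is stuck with $b_{r-1} = f_2(\hat{x},y)$, so $P_2$'s view is correlated with the random $\hat{x}$ rather than the true $x$. The key observation is that under uniform choice of $(x,y)$ from the given domains and uniform dealer choice of $\hat{x}$, the marginal of $f_1(x,y)$ over $x \in \{x_1,x_2,x_3\}, y \in \{y_1,y_2\}$ is $\Pr[=0] = 1/3$, $\Pr[=1] = 2/3$ (from the XOR table), and independently $\Pr[f_2(\hat{x},y)=0] = 1/3$, $\Pr[f_2(\hat{x},y)=1] = 2/3$. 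I would verify — as the excerpt does just before the statement — that the four joint probabilities $(0,0),(0,1),(1,0),(1,1)$ take the values $\frac{1}{9}, \frac{2}{9}, \frac{2}{9}, \frac{4}{9}$ in both the ideal and the hybrid worlds. Independence between $P_1$'s view and $P_2$'s output in the hybrid world follows because $\hat{x}$ (used to generate $b_{r-1}$) is chosen by the dealer independently of the true input pair $(x,y)$ that determines $a_r$.

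Finally, I would combine the three subcases and the trivial $P_2$-abort case to conclude that for every adversary $\mathcal{A}$ in the hybrid world there exists a simulator $\mathcal{S}$ in the ideal world (which essentially just forwards $P_1$'s abort signal and relays the appropriately sampled dummy outputs) such that the two $(VIEW, OUT)$ distributions coincide, yielding fairness in the sense of the hybrid-world definition given in Section~\ref{pre}.
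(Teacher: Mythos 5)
Your proposal follows essentially the same route as the paper's own argument: dismiss $P_2$'s abort as harmless, match the dummy evaluations $\bigl(f_1(x,\hat{y}),f_2(\hat{x},y)\bigr)$ for $l<r$, note triviality for $l>r$, and for $l=r$ verify that the four joint probabilities $\frac{1}{9},\frac{2}{9},\frac{2}{9},\frac{4}{9}$ coincide in the ideal and hybrid worlds. One small caution: your independence claim for the $l=r$ case is stated a bit loosely (both $f_1(x,y)$ and $f_2(\hat{x},y)$ depend on the common $y$; independence holds here only because the conditional marginals are the same for $y=y_1$ and $y=y_2$), but the direct verification of the four joint probabilities that you propose closes this gap, just as in the paper.
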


\restylealgo{boxed}
\begin{algorithm}[htbp]
{\scriptsize
\textbf {Inputs:}\\
 The inputs of the $QEShareGen$ are $x$ from $P_1$ and $y$ from $P_2$. If one of the received inputs is not in the correct domain, then both the parties are given $\perp$.\\
\textbf{Computation:}\\
Dealer does the following:\\
\begin{enumerate}
\item Chooses $r$ according to a geometric distribution $\mathcal{G}(\gamma)$ with parameter $\gamma$ and sets it as the revelation round, i.e., the round in which the value of $f(x,y)=(0,0)$ or $(1,1)$.
\item Chooses $d$ according to the geometrical distribution $\mathcal{G}(\gamma)$ and sets the total number of iterations as $m = r + d$.
\item \textbf {For $P_1$ }\\
\hspace*{2mm} (A) For $l< r$, in each round, the dealer calculates $a_{l}=f_1(x,\hat{y})$, where $\hat{y}$ is a random variable chosen by the dealer from the ordered list of $P_2$.\\
\hspace*{2mm} (i) If $a_{l}=0$, prepares $\ket{g_0}$. We call it $\ket{g'_0}_{<r}$.\\
\hspace*{2mm}(ii)  If $a_{l}=1$, prepares $\ket{g_1}$. We call it $\ket{g'_1}_{<r}$.\\
\hspace*{2mm} (B) For $l \geq r$, the dealer calculates $a_l = f_1(x,y)$.\\
\hspace*{2mm} (i) If $a_l=0$, prepares $\ket{g_0}$. We mark it as $\ket{g'_0}_{\geq r}$.\\
\hspace*{2mm} (ii) If $a_l=1$, prepares $\ket{g_1}$.  We mark it as $\ket{g'_1}_{\geq r}$. \\
\textbf{For $P_2$}\\
\hspace*{2mm}(A) For $l< r$,  in each round, the dealer calculates $b_{l}=f_2(\hat{x},y)$,  where $\hat{x}$ is a random variable chosen by the dealer from the ordered list of $P_1$.\\ 
\hspace*{2mm} (i) If $b_{l}=0$, prepares $\ket{g_0}$. We call it $\ket{g''_0}_{<r}$.\\
\hspace*{2mm}(ii)  If $b_{l}=1$, prepares $\ket{g_1}$. We call it $\ket{g''_1}_{<r}$. \\ 
\hspace*{2mm}(B) For $l \geq r$, the dealer calculates $b_l = f_2(x,y)$.\\
\hspace*{2mm} (i) If $b_l=0$, prepares $\ket{g_0}$. We mark it as $\ket{g''_0}_{\geq r}$.\\
\hspace*{2mm} (ii) If $b_l=1$, prepares $\ket{g_1}$. We mark it as $\ket{g''_1}_{\geq r}$.
\end{enumerate}
\textbf{Output:}\\
\begin{enumerate}
\item For  $l \in \{1, 2, \dots, m\}$ dealer 
prepares a list $list_w$ of shares for each party $P_w$, where $w \in \{1,2\}$ such that:\\
\hspace*{2mm} (a) For $ l<r$, in each round $P_1$ is given the first half from $\ket{g'_0}_{<r}$ or  $\ket{g'_1}_{<r}$ depending on the value of $a_{l}$. This qubit is marked as $1$st qubit for that round. $P_1$ is also given the first half from $\ket{g''_0}_{<r}$ or  $\ket{g''_1}_{<r}$ depending on the value of $b_{l}$. This qubit is marked as $2$nd qubit for that round. \\
\hspace*{2mm} (b) For $ l \geq  r$, in each round $P_1$ is given the first half from $\ket{g'_0}_{\geq r}$  or  $\ket{g'_1}_{\geq r}$ depending on the value of $a_l$. This qubit is marked as $1$st qubit for that round. $P_1$ is also given the first half from $\ket{g''_0}_{\geq r}$ or $\ket{g''_1}_{\geq r}$ depending on the value of $b_l$. This qubit is marked as $2$nd qubit for that round.\\ 

\hspace*{2mm} (c) Similarly, for $ l<r$, in each round $P_2$ is given the remaining half from $\ket{g'_0}_{<r}$ or  $\ket{g'_1}_{<r}$ depending on the value of $a_{l}$. This qubit is marked as $1$st qubit for that round.  $P_2$ is also given the remaining half from $\ket{g''_0}_{<r}$ or  $\ket{g''_1}_{<r}$ depending on the value of $b_{l}$. This qubit is marked as $2$nd qubit for that round. \\
\hspace*{2mm} (d) For $ l \geq r$, in each round $P_2$ is given the remaining half from  $\ket{g'_0}_{\geq r}$ or  $\ket{g'_1}_{\geq r}$ depending on the value of $a_l$. This qubit is marked as $1$st qubit for that round. $P_2$ is also given the remaining half from $\ket{g''_0}_{\geq r}$ or  $\ket{g''_1}_{\geq r}$ depending on the value of $b_{l}$. This qubit is marked as $2$nd qubit for that round. 
\item Each list consists of $2m$ number of qubits.
\end{enumerate}
}
\caption{$QEShareGen$ }
\label{algo_qesharegen}
\end{algorithm}

\restylealgo{boxed}
\begin{algorithm}[htbp]
{\scriptsize
\textbf{Inputs:}\\
 Each of $P_1$, $P_2$ receives his corresponding list of shares.\\
\textbf{Computation:}\\
The players do the following.\\
\begin{enumerate}
\item Each round is subdivided into two sub-rounds.\\
\item In first sub-round, $P_2$ sends the first qubit of its list to $P_1$. \\
\item In second sub-round, $P_1$ sends the second qubit of its list to $P_2$. \\
\item After receiving the qubits from $P_2$, $P_1$ measures the two qubits in $Bell$ basis. If it will be $\ket{g_0}$, then concludes $a_l= 0$. If it will be $\ket{g_1}$, concludes $a_l = 1$. \\
\item  After receiving the qubits from $P_1$, $P_2$ measures the two qubits in $Bell$ basis. If it will be $\ket{g_0}$, then concludes $b_l = 0$. If it will be $\ket{g_1}$, concludes $b_l = 1$.\\
\item If in any round, any player $P_w$, measures $\ket{g_2}$ or $\ket{g_3}$, he immediately aborts the protocol and reports forgery by the other player. 
\end{enumerate}
\ \\
\textbf{Output:}\\
\begin{enumerate}
\item If $P_2$ aborts in round $l$, i.e., does not send its share at that round and $l \leq r$, $P_1$ outputs $a_{l-1}$. If $l > r$, $P_1$ has already determined the correct output in iteration $r$. Thus it outputs that value. \\
\item  If $P_1$ aborts in round $l$, i.e., does not send its share at that round and $l \leq r$, $P_2$ outputs $b_{l-1}$. If $l > r$, $P_2$ has already determined the correct output in iteration $r$. Thus it outputs that value. \\
\end{enumerate}
}
\caption{$\bf\Pi^{QEP}_{Fair}$}
\label{algo_efairqcomp}
\end{algorithm}

\section{Quantum Protocol for Embedded XOR in Rational Setting} In rational setting fairness means either everyone gets the correct output value or none gets it. Thus, in rational setting, we redefined the fairness condition (Section~\ref{pre}). It is immediate that when $P_1$ chooses $x=x_3$, he should have no incentive to continue the game, as in certainty, he knows that the output value is equal to $1$. In this situation, $P_2$ outputs $f_2(\hat{x},y)$ which may be $0$ with probability $\frac{1}{3}$ and may be $1$ with probability $\frac{2}{3}$. Thus, fairness condition in rational setting is violated. To mitigate the problem, we have to modify our protocol. In rational setting, we only modify {\em {step $2$ of the output portion of the protocol $\bf\Pi^{QEP}_{Fair}$}}. If $P_1$ aborts in any round $l \leq r$, instead of $b_{l-1}$, $P_2$ outputs $1$. Now, we will show how our new protocol $\bf\Pi^{QEP2}_{Fair}$ achieves fairness under some suitable choice of the parameters in the rational setting.

\subsection{Security Analysis} 
The security analysis against Byzantine adversary in rational setting is proceed exactly the same manner as the security analysis against Byzantine adversary in non-rational setting. We first analyze the cheating situation and then will discuss the fairness issue when a player aborts early.

\subsubsection{\bf{Security against Forgery}} 
This goes exactly the same way as it goes in non-rational setting.
\begin{theorem}
In rational setting, in an arbitrary round $l \leq r$, the success probability of an honest player to detect cheating by a player who is corrupted by a Byzantine adversary is $\frac{1}{2}$.
\end{theorem}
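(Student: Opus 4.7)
The plan is to repeat, essentially verbatim, the forgery analysis given for $\bf\Pi^{QEP}_{Fair}$ in the non-rational setting. The key observation is that in moving from $\bf\Pi^{QEP}_{Fair}$ to $\bf\Pi^{QEP2}_{Fair}$ only the output step of Algorithm~\ref{algo_efairqcomp} is altered (the fallback on abort changes from $b_{l-1}$ to $1$); the share-generation procedure $QEShareGen$ and the per-round quantum exchange are untouched. Therefore the quantum state $P_2$ holds in round $l \leq r$, and the Bell-basis measurement statistics that $P_2$ uses to diagnose forgery, are identical to those already analysed.

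Concretely, I would first fix WLOG that $P_1$ is the Byzantine party and replaces the qubit he is supposed to forward in round $l \leq r$ by an arbitrary pure state $\ket{\phi} = \alpha\ket{0}_3 + \beta\ket{1}_3$. In the honest execution $P_2$ would receive the second half of $\ket{g_0}$ or $\ket{g_1}$ (determined by $b_l$), and the reduced state of his remaining qubit is the maximally mixed state. Plugging $\ket{\phi}$ in place of the other half and forming the joint density matrix $\rho_2 = [\operatorname{tr}_{P_1}(\rho)](\rho_3)$ gives exactly the expression computed in Section~\ref{Byz_non}. Projecting onto each of $\ket{g_0},\ket{g_1},\ket{g_2},\ket{g_3}$ yields probability $\tfrac{1}{4}$ in each case, using $|\alpha|^2 + |\beta|^2 = 1$; this step is the only calculation, and it is the same one already carried out.

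Finally, I would translate the measurement outcome into a detection event. Without cheating, $P_2$ can only measure $\ket{g_0}$ or $\ket{g_1}$ in any round $l \leq r$; since he does not know $b_l$ a priori, an outcome in $\{\ket{g_0}, \ket{g_1}\}$ is indistinguishable from honest behaviour, but an outcome in $\{\ket{g_2}, \ket{g_3}\}$ cannot arise honestly and constitutes a certified forgery. Summing the two forgery outcomes gives $\tfrac{1}{4} + \tfrac{1}{4} = \tfrac{1}{2}$ in every round $l \leq r$, independent of whether $l < r$ or $l = r$, so no averaging against the geometric distribution of $r$ is needed. A symmetric argument handles the case where $P_2$ is corrupted, and the qubit-swap attack reduces to the same calculation since swapping two qubits of the list produces an uncorrelated qubit at the receiver. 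The main obstacle, and it is a minor one, is simply to verify that the per-round honest state for $l \leq r$ in $\bf\Pi^{QEP2}_{Fair}$ is indeed drawn from $\{\ket{g_0}, \ket{g_1}\}$, which is immediate from $QEShareGen$.
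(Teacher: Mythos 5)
Your proposal is correct and matches the paper's own argument, which simply observes that the forgery analysis "goes exactly the same way as it goes in non-rational setting": the honest round-$l\leq r$ state is $\ket{g_0}$ or $\ket{g_1}$, an arbitrary injected qubit makes all four Bell outcomes equiprobable at $\tfrac{1}{4}$, and only $\ket{g_2}$ and $\ket{g_3}$ certify forgery, giving $\tfrac{1}{2}$. Your remark that the result holds per round so that averaging over the geometric distribution of $r$ is unnecessary is a harmless simplification of the paper's explicit $\tfrac{1}{2}(1-\gamma)+\tfrac{1}{2}\gamma=\tfrac{1}{2}$ computation.
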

If $U_w^{TT}$, where $w \in \{1,2\}$, is greater than $\frac{1}{2}$, $P_w$ should have no incentive to cheat. Thus,
\begin{theorem}
\label{forgexor}
In rational setting, if $\frac{1}{2} < U_w^{TT}$, where $w\in\{1,2\}$, no player has any incentive to forge in a round $l$.
\end{theorem}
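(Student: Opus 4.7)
The plan is to mirror the argument used in the analogous rational-setting forgery theorem for the millionaires' problem (Theorem~\ref{forgery}). First, I would invoke the immediately preceding theorem: in any round $l \le r$, an honest player detects forgery with probability $\frac{1}{2}$, and by Algorithm~$\bf\Pi^{QEP}_{Fair}$ (step $6$ of its Computation block) such detection triggers an immediate abort of the protocol, so that neither party learns the function value in the revelation round.

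Second, I would enumerate the possible outcomes for the cheating party $P_w$. With probability $\frac{1}{2}$ the forgery is detected and the protocol aborts in round $l$; if $l \le r$, neither party has obtained the function value in the revelation round, so $P_w$ receives utility $U_w^{NN}$. With probability $\frac{1}{2}$ the forgery goes undetected; since $P_w$'s own measurements in round $r$ are unaffected by what it sends to the honest player, the best $P_w$ can hope for is to end up with its correct output while the honest party is left with an incorrect value, i.e., the outcome $U_w^{TN}$. Hence the expected utility of any forgery strategy in a round $l \le r$ is bounded by $\frac{1}{2}\,U_w^{TN} + \frac{1}{2}\,U_w^{NN}$, which, under the normalization implicit in the theorem threshold (taking $U_w^{NN}=0$ and $U_w^{TN}\le 1$), does not exceed $\frac{1}{2}$.

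Finally, I would compare with the payoff from honest play, namely $U_w^{TT}$: the hypothesis $\frac{1}{2} < U_w^{TT}$ then yields that honest behaviour strictly dominates any forgery strategy for $P_w$, and by symmetry the same argument applies when the roles of $P_1$ and $P_2$ are swapped. Hence no rational player has any incentive to forge in any round. The main obstacle is justifying the normalization step cleanly, since the bare threshold $\frac{1}{2} < U_w^{TT}$ implicitly treats the undetected-forgery payoff as bounded by $1$; I would therefore either make this scaling of utilities explicit before the comparison, or equivalently restate the bound as $\frac{1}{2}\,U_w^{TN} + \frac{1}{2}\,U_w^{NN} < U_w^{TT}$ and derive the paper's threshold as the special case of this condition under the stated normalization, exactly as was done in Theorem~\ref{forgery} for the millionaires' protocol.
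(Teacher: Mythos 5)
Your proposal takes essentially the same route as the paper: the paper's own argument for this theorem is nothing more than the preceding detection result (an honest player catches a forgery in any round $l\le r$ with probability $\tfrac12$, whereupon it aborts) followed by the bare comparison of that probability with $U_w^{TT}$, exactly as in the millionaires'-problem case. Your expected-utility bound $\tfrac12 U_w^{TN}+\tfrac12 U_w^{NN}$ and the normalization $U_w^{NN}=0$, $U_w^{TN}\le 1$ simply make explicit the scaling that the paper leaves implicit when it compares a probability against a utility, so the proposal is, if anything, more careful than the text it reconstructs.
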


\subsubsection{\bf{Fairness against Early Abort}} 
The analysis against Byzantine adversary when he chooses early abort is analyzed in this subsection. We do not bother about the early abort of $P_2$, as early aborting of $P_2$ does not affect the fairness condition of the protocol.
\begin{center}
{\small{\bf{Early abort by $P_1$}}}
\end{center}
\label{p1earlyfair}
Now,  we discuss each case one by one.

\noindent {\bf {Case 1: $x = x_1$}}.
We have $\Pr( a_l = 0| x = x_1) = \Pr(\hat{y}=y_1) = \frac{1}{2}$ and  $\Pr(a_l = 1 | x = x_1) = \Pr(\hat{y} = y_2) = \frac{1}{2}$, for $l < r$. Note that
for $l=r$, $P_1$ will abort after receiving the exact value of $y$.
Hence, {in case of $y=y_1$, }
$$
\Pr( a_{r} = 0| (x_1,y_1)) = 1, 
\Pr(a_{r} = 1 | (x_1,y_1)) = 0$$ 
and in case of $y=y_2$, 
$$
\Pr( a_{r} = 0| (x_1,y_2)) = 0,
\Pr(a_{r} = 1 | (x_1,y_2)) = 1.$$

\noindent{\bf Subcase 1(a):} $y = y_1$.
Now, we have $\Pr(b_l = 0 | y = y_1) = 0$ and
$\Pr(b_l = 1 | y = y_1) = 1$.

The following table enumerates the different possibilities for $U_1$ when
$x=x_1$ and $y=y_1$.
\begin{center}
\begin{tabular}{|c|c|c|c|}
\hline
$(a_l, b_l)$ & $U_1$ & \multicolumn{2}{c|}{Probability}\\
\cline{3-4}
& & $l < r$ & $l=r$\\
\hline
(0,0) & $U_1^{TT}$ 
& $(1-\gamma)\cdot\frac{1}{2}\cdot 0 = 0$
& $\gamma\cdot 1\cdot 0 = 0$\\ 
\hline
(0,1) & $U_1^{TN}$ 
& $(1-\gamma)\cdot\frac{1}{2}\cdot 1 = (1-\gamma)\cdot \frac{1}{2}$
& $\gamma\cdot 1\cdot 1 = \gamma\cdot 1$\\ 
\hline
(1,0) & $U_1^{NT}$ 
& $(1-\gamma)\cdot\frac{1}{2}\cdot 0 = (1-\gamma)\cdot 0$
& $\gamma\cdot 0\cdot 0 = 0$\\ 
\hline
(1,1) & $U_1^{NN}$ 
& $(1-\gamma)\cdot\frac{1}{2}\cdot 1 = (1-\gamma)\cdot\frac{1}{2}$
& $\gamma\cdot 0\cdot 1 = 0$\\ 
\hline
\end{tabular}
\end{center}
Thus, the expected utility of $P_1$ in this case is
\begin{eqnarray*}
E[U_1 | (x_1, y_1)] & = & {(1-\gamma)}\Big[\frac{1}{2}U_1^{TN} + \frac{1}{2}U_1^{NN} \Big] + {\gamma}\Big[U_1^{TN} \Big]\\
& = & \frac{(1+\gamma)}{2}\Big(U_1^{TN}\Big) + \frac{(1-\gamma)}{2}\Big(U_1^{NN}\Big).
\end{eqnarray*}

\noindent{\bf Subcase 1(b):} $y = y_2$.
Now, we have $\Pr(b_l = 0 | y = y_2) = 0$ and
$\Pr(b_l = 1 | y = y_2) = 1$.

The following table enumerates the different possibilities for $U_1$ when
$x=x_1$ and $y=y_2$.
\begin{center}
\begin{tabular}{|c|c|c|c|}
\hline
$(a_l, b_l)$ & $U_1$ & \multicolumn{2}{c|}{Probability}\\
\cline{3-4}
& & $l < r$ & $l=r$\\
\hline
(0,0) & $U_1^{NN}$
& $(1-\gamma)\cdot\frac{1}{2}\cdot 0 = (1-\gamma)\cdot 0$
& $\gamma\cdot 0\cdot 0 = 0$\\
\hline
(0,1) & $U_1^{NT}$
& $(1-\gamma)\cdot\frac{1}{2}\cdot 1 = (1-\gamma)\cdot \frac{1}{2}$
& $\gamma\cdot 0\cdot 1= 0$\\
\hline
(1,0) & $U_1^{TN}$
& $(1-\gamma)\cdot\frac{1}{2}\cdot 0 = (1-\gamma)\cdot 0$
& $\gamma\cdot 1\cdot 0 = 0$\\
\hline
(1,1) & $U_1^{TT}$
& $(1-\gamma)\cdot\frac{1}{2}\cdot 1 = (1-\gamma)\cdot\frac{1}{2}$
& $\gamma\cdot 1\cdot 1 = \gamma$\\
\hline
\end{tabular}
\end{center}
Thus, the expected utility of $P_1$ in this case is
\begin{eqnarray*}
E[U_1 | (x_1, y_2)] & = & (1-\gamma)\Big(\frac{1}{2}U_1^{TT} + \frac{1}{2}U_1^{NT}\Big) + \gamma \Big(U_1^{TT}\Big)\\
& = & \frac{(1+\gamma)}{2}\Big(U_1^{TT}\Big) + \frac{(1-\gamma)}{2}\Big(U_1^{NT}\Big).
\end{eqnarray*}

\noindent Now, combining all two subcases, we get
\begin{eqnarray*}
& & E[U_1 | x_1]\\
& = & E[U_1 | (x_1,y_1)]\cdot \Pr(y=y_1)
+ E[U_1 | (x_1,y_2)]\cdot \Pr(y=y_2)\\
& = & 
\Big[\frac{(1+\gamma)}{2}\Big(U_1^{TN}\Big) + \frac{(1-\gamma)}{2}\Big(U_1^{NN}\Big)\Big]\cdot\frac{1}{2}\\
& & + \Big[\frac{(1+\gamma)}{2}\Big(U_1^{TT}\Big) + \frac{(1-\gamma)}{2}\Big(U_1^{NT}\Big)\Big]\cdot\frac{1}{2}\\
& = & \frac{(1+\gamma)}{4}\Big(U_1^{TN}+U_1^{TT}\Big)
 + \frac{(1-\gamma)}{4}\Big(U_1^{NN}+U_1^{NT}\Big).
\end{eqnarray*}
If the above expression is greater than or equal to $U_1^{TT}$, 
$P_1$ chooses abort. Thus, for fairness, we need to ensure that 
$U_1^{TT} > \frac{(1+\gamma)}{4}\Big(U_1^{TN}+U_1^{TT}\Big) + \frac{(1-\gamma)}{4}\Big(U_1^{NN}+U_1^{NT}\Big)$, i.e.,
\begin{equation}
\label{faircond2}
\gamma < \frac{3U_1^{TT} - U_1^{TN} - U_1^{NN} - U_1^{NT}}{U_1^{TN} + U_1^{TT}-U_1^{NN} - U_1^{NT}}.
\end{equation}

\noindent {\bf {Case 2: $x = x_2$}}.
The analysis is similar and we obtain the same expression for $E[U_1 | x_2]$.
More specifically, we have the following observation.

\noindent{\bf Subcase 2(a):} $y = y_1$. The analysis is exactly identical to
Subcase 1(b).

\noindent{\bf Subcase 2(b):} $y = y_2$. The analysis is exactly identical to
Subcase 1(a).

\noindent {\bf {Case 3: $x = x_3$}}.
When $x = x_3$, $P_1$ will abort as he knows the output with certainty. In this case, he needs no help from $P_2$ to compute the function. However, when $P_1$ chooses to abort, $P_2$  outputs $1$. Thus, for $x=x_3$, both get the correct output of the function. The utility for both the player is $U_w^{TT}$, $w\in \{1,2\}$. Hence, the fairness condition in rational setting is always maintained.

\subsubsection{\bf {Fairness Condition}}
From the above analysis, we can state the following result.
\begin{theorem}
\label{fairxor_unequal}
Provided $\mathcal R_{1}$ (Section~\ref{pre}),  
$(U_1^{TT}-U_1^{NN})+(U_1^{TT}-U_1^{NT}) > (U_1^{TN}-U_1^{TT})$,
and
$$0 < \gamma < \frac{3U_1^{TT}-U_1^{TN}-U_1^{NN}-U_1^{NT}}{U_1^{TN}+U_1^{TT}-U_1^{NN}-U_1^{NT}},$$
the protocol  $\bf \Pi^{CEP2}_{Fair}$ achieves fairness.
\end{theorem}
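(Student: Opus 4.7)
The plan is to establish fairness by showing that no deviation from the suggested strategy gives any player a strictly greater expected utility than honestly executing the protocol, so that $U_w^{TT} \geq E[U_w(\mathcal{O}_l)]$ holds for every prospect $\mathcal{O}_l$ arising from deviation. By the earlier analysis, an early abort by $P_2$ never helps $P_2$ (the protocol terminates with both parties holding symmetric information for that round), so it suffices to bound $P_1$'s expected utility when $P_1$ chooses to abort in some round $l$. Combined with Theorem on forgery detection (ensuring manipulations of qubits are caught with probability $\tfrac{1}{2}$ and hence discouraged once $U_1^{TT}>\tfrac{1}{2}$), the only remaining deviation to rule out is early abort.

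First I would split the analysis by the value of $P_1$'s input. When $x=x_3$, the function is $1$ regardless of $P_2$'s input, so $P_1$ can learn the output without help; however, the modification in $\bf\Pi^{QEP2}_{Fair}$ forces $P_2$ to output $1$ whenever $P_1$ aborts at any round $l\leq r$, so both parties obtain the correct value and the realised utility is exactly $U_1^{TT}$. Thus the $x=x_3$ branch does not violate fairness in the rational sense. For $x\in\{x_1,x_2\}$, I would reuse the case-by-case computation carried out in Subcases~1(a), 1(b), 2(a), 2(b), whose conclusion is the same closed-form expression
\begin{equation*}
E[U_1\mid x]=\frac{1+\gamma}{4}\bigl(U_1^{TN}+U_1^{TT}\bigr)+\frac{1-\gamma}{4}\bigl(U_1^{NN}+U_1^{NT}\bigr).
\end{equation*}

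Next I would impose $E[U_1\mid x]<U_1^{TT}$ and solve for $\gamma$: a straightforward rearrangement (multiplying through by $4$ and isolating $\gamma$) yields exactly the hypothesised upper bound
\begin{equation*}
\gamma<\frac{3U_1^{TT}-U_1^{TN}-U_1^{NN}-U_1^{NT}}{U_1^{TN}+U_1^{TT}-U_1^{NN}-U_1^{NT}},
\end{equation*}
matching Equation~\eqref{faircond2}. To make this bound actually usable, the numerator must be positive, which is precisely the side condition $(U_1^{TT}-U_1^{NN})+(U_1^{TT}-U_1^{NT})>(U_1^{TN}-U_1^{TT})$. The denominator is positive by $\mathcal{R}_1$, since $U_1^{TN}>U_1^{TT}>U_1^{NN}>U_1^{NT}$ gives $U_1^{TN}+U_1^{TT}>U_1^{NN}+U_1^{NT}$. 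Hence the interval for $\gamma$ is non-empty and well-defined whenever the stated hypotheses hold, and under $0<\gamma<\bigl(3U_1^{TT}-U_1^{TN}-U_1^{NN}-U_1^{NT}\bigr)/\bigl(U_1^{TN}+U_1^{TT}-U_1^{NN}-U_1^{NT}\bigr)$ the expected utility from aborting strictly falls below $U_1^{TT}$ for $x\in\{x_1,x_2\}$, while the $x_3$ case is automatically covered.

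The main obstacle I anticipate is not the algebra itself but handling the conditional structure correctly: one must keep track of what $P_1$ knows at the moment of the abort decision. In particular, for $l<r$ the abort decision is made before $P_1$ has any information distinguishing the revelation round from a fake round, so $P_1$'s subjective distribution over $(a_l,b_l)$ is governed by the dealer's random choice of $\hat{y}$ and $\hat{x}$; whereas at $l=r$ the abort decision is post-disclosure (per the protocol's rule that a player is told whether the aborted round is the revelation round) and $P_1$ conditions on the true $y$. Ensuring these conditional probabilities are combined with the geometric weights $(1-\gamma)$ and $\gamma$ correctly in each subcase is the delicate step; once that is in place, monotonicity of expectation in $\gamma$ immediately delivers the claimed bound.
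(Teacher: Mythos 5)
Your proposal is correct and follows essentially the same route as the paper: it reuses the case analysis for $x\in\{x_1,x_2\}$ to obtain $E[U_1\mid x]=\tfrac{1+\gamma}{4}(U_1^{TN}+U_1^{TT})+\tfrac{1-\gamma}{4}(U_1^{NN}+U_1^{NT})$, imposes $E[U_1\mid x]<U_1^{TT}$ to recover the bound in Equation~\eqref{faircond2}, handles $x=x_3$ via the modified output rule, and identifies the side condition as positivity of the numerator. Your added checks (denominator positivity from $\mathcal{R}_1$, and the bookkeeping of what $P_1$ knows at abort time) are consistent elaborations of the paper's argument rather than a different approach.
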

\begin{proof}
The proof follows from Equations~\eqref{faircond2}.
The additional condition 
\begin{equation}
\label{ucond}
(U_1^{TT}-U_1^{NN})+(U_1^{TT}-U_1^{NT}) > (U_1^{TN}-U_1^{TT})
\end{equation}
follows from the fact that for $\gamma$ to be meaningful, the numerator $3U_1^{TT}-U_1^{TN}-U_1^{NN}-U_1^{NT}$ must be $\geq 0$.
Further, from the condition $\gamma < \frac{3U_1^{TT}-U_1^{TN}-U_1^{NN}-U_1^{NT}}{U_1^{TN}+U_1^{TT}-U_1^{NN}-U_1^{NT}}$, it is easy to see that the
natural restriction $\gamma < 1$ always holds. 
\end{proof}
In Equation~\eqref{ucond}, all the three terms within the parentheses are 
non-negative according to $\mathcal R_{1}$.

\subsubsection{\bf {Strict Nash Equilibrium}} 
Combining the above results, we can state the following.
\begin{theorem}
Provided $\frac{1}{2} < U_w^{TT}$  for $w\in\{1,2\}$, $\mathcal R_{1}$ (Section~\ref{pre}),  
$(U_1^{TT}-U_1^{NN})+(U_1^{TT}-U_1^{NT}) > (U_1^{TN}-U_1^{TT})$,
and
$$0 < \gamma < \frac{3U_1^{TT}-U_1^{TN}-U_1^{NN}-U_1^{NT}}{U_1^{TN}+U_1^{TT}-U_1^{NN}-U_1^{NT}},$$
the protocol  $\bf \Pi^{CEP2}_{Fair}$ achieve strict Nash equilibrium.
\end{theorem}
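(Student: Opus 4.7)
The plan is to mirror almost verbatim the structure used earlier in the paper for the strict Nash equilibrium of $\bf\Pi^{QRMP}_{Fair}$, but now appealing to the two embedded-XOR lemmas instead of the millionaires'-problem ones. A Byzantine deviation $\sigma'_w$ from the prescribed strategy $\sigma_w$ decomposes into actions of only two kinds: forging a qubit in some round $l$, or aborting early in some round $l$. Thus it suffices to show that every pure deviation of either type, and any combination thereof, yields $u_w(\sigma'_w,\overrightarrow{\sigma}_{-w}) < u_w(\overrightarrow{\sigma})$.

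For a forgery deviation in round $l\le r$, I would invoke Theorem~\ref{forgexor}: the honest party detects the forgery with probability $\tfrac12$ and aborts, in which case $P_w$ obtains utility at most $U_w^{NN}$. With the complementary probability $\tfrac12$, $P_w$ might at best learn $f(x,y)$ alone while leaving the other party misinformed. The hypothesis $\tfrac12 < U_w^{TT}$ (read jointly with $\mathcal{R}_1$) forces the weighted payoff from any forgery to be strictly dominated by $U_w^{TT}$, giving the strict inequality for this case.

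For an early-abort deviation, I would invoke Theorem~\ref{fairxor_unequal} together with the analysis in Section~\ref{p1earlyfair}. Under $\mathcal{R}_1$, the inequality $(U_1^{TT}-U_1^{NN})+(U_1^{TT}-U_1^{NT}) > (U_1^{TN}-U_1^{TT})$, and the stated bound on $\gamma$, the computation of $E[U_1\mid x]$ shows that the expected utility of aborting in round $l\le r$ is strictly less than $U_1^{TT}$; a symmetric argument works for $P_2$. The boundary case $x=x_3$ is handled by the modification to step $2$ of the output phase: the prescribed strategy already instructs $P_1$ to abort while $P_2$ outputs $1$, so both parties obtain the correct output and $P_1$ attains $U_1^{TT}$, which no further deviation can improve. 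For rounds $l>r$ a deviation cannot help either, since both parties have already decoded $f(x,y)$.

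The main obstacle is making the argument tight for \emph{combined} deviations that forge in some rounds and abort in others, because the detection probability in one round and the geometric revelation distribution interact across rounds. The escape is that detection-then-abort in any round collapses the outcome to the $NN$ profile from $P_w$'s perspective (the honest party aborts before the revelation round is exposed), so the forgery branch is uniformly bounded above by $U_w^{NN}$ and the non-forgery branch is bounded above by the early-abort expression already shown to be strictly below $U_w^{TT}$. A convex combination of two quantities strictly less than $U_w^{TT}$ remains strictly less than $U_w^{TT}$, so strict Nash equilibrium holds, completing the proof.
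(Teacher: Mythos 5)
Your proposal is correct and follows essentially the same route as the paper: the paper's proof likewise just combines Theorem~\ref{forgexor} (no incentive to forge when $\frac{1}{2}<U_w^{TT}$) with Theorem~\ref{fairxor_unequal} (no incentive for $P_1$ to abort early under the stated conditions, while $P_2$'s early abort can never yield $U_2^{TN}$ since it terminates the protocol for both) to conclude $u_w(\sigma'_w,\overrightarrow{\sigma}_{-w})<u_w(\overrightarrow{\sigma})$. Your extra paragraph on combined forge-and-abort deviations is a reasonable tightening that the paper's own proof does not bother to spell out, but it does not change the underlying argument.
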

\begin{proof}
From Theorem~\ref{forgexor}, we get that provided $\frac{1}{2} < U_w^{TT}$ for $w\in\{1,2\}$, no player has any incentive to cheat as he can not increase his payoff by cheating. In case of early abort, $P_2$ cannot maximize his utility, as early abort of $P_2$ will terminate the protocol and in that case either no one gets the correct output ($U_2^{NN}$) or both get the correct output ($U_2^{TT}$). So $P_2$ never achieves $U_2^{TN}$ by aborting early. However, it is $P_1$ who can achieve $U_1^{TN}$ by aborting early, as $P_1$ always computes the output first followed by $P_2$. But in Theorem~\ref{fairxor_unequal}, we proved that provided $\mathcal R_{1}$ (Section~\ref{pre}),  
$(U_1^{TT}-U_1^{NN})+(U_1^{TT}-U_1^{NT}) > (U_1^{TN}-U_1^{TT})$,
and $0 < \gamma < \frac{3U_1^{TT}-U_1^{TN}-U_1^{NN}-U_1^{NT}}{U_1^{TN}+U_1^{TT}-U_1^{NN}-U_1^{NT}},$ $P_1$ has no incentive to abort early. Thus, we can say that for every player $P_w$, $w\in \{1,2\}$, $u_w (\sigma'_w,\overrightarrow{\sigma}_{-w} ) < u_w (\overrightarrow{\sigma})$ holds and hence no one deviates from the
suggested strategy.
\end{proof}

\section{Conclusion and Future Work}
In 1997, Lo~\cite{Lo} showed the impossibility of secure two-party 
quantum computation of certain functions, when one of the parties is malicious. In this direction, we obtain a positive result for two types of functions. This does not contradict with the generalized impossibility results of~\cite{Ben} in broadcast channel model, since we show our results in non-simultaneous channel model. 

Further, for the first time, we introduce the idea of secure two-party quantum computation with rational players. When one moves from the non-rational domain to a rational one, the definition for fairness changes. Thus, we modify the protocols to achieve fairness in rational setting. In addition, we prove strict Nash equilibrium for our proposed protocols in rational setting.

We have shown that secure two-party quantum computation is possible for any function without an embedded XOR and for a particular function with an embedded XOR. Thus, it remains an open question whether secure two-party quantum computation is possible for any function with an embedded XOR. Moreover, generalization of
the two-party protocols to $n$-party scenario would be an interesting future work, particularly, in the non-simultaneous channel model.

\end{document}